\newtheorem{Thm}{Theorem}
\newtheorem{Lem}[Thm]{Lemma}
\newtheorem{Exp}[Thm]{Example}
\newcommand{\F}{{\mathbb  F}}
\newcommand{\wt}{{\mathrm{wt}}}
\newcommand{\tr}{{\mathrm{Tr}}}
\newcommand{\C}{{\mathcal C}}
\journalname{}
\begin{document}

\title{A class of three-weight and five-weight linear codes
\thanks{ This research is supported by a National Key Basic Research Project of China (2011CB302400), National Science Foundation of China (61379139)
and  the ``Strategic Priority Research Program" of the Chinese Academy of Sciences, Grant No. XDA06010701 and Foundation of NSSFC(No.13CTJ006).}}


\author{Fei Li$^{1}$, Qiuyan Wang$^{*,2,3}$, Dongdai Lin$^{2}$}


\institute{F. Li \at $1$. School of Statistics and Applied Mathematics,   Anhui University of Finance and Economics, Bengbu City,  Anhui Province,  233030, China \\
\email{cczxlf@163.com}\\
             Q. Wang,  \at
             $2$. State Key Laboratory of Information Security, Institute of Information Engineering,
             Chinese Academy of Sciences,
              Beijing, 100195,  China\\
             $3$.  University of Chinese Academy of Sciences, Beijing 100049, China\\
              \email{wangqiuyan@iie.ac.cn}\\
               D. Lin \at
             $2$. State Key Laboratory of Information Security, Institute of Information Engineering,
             Chinese Academy of Sciences,
              Beijing, 100195, China\\
              \email{ddlin@iie.ac.cn}   }            
\date{Received: date / Accepted: date}
\maketitle

\begin{abstract}
Recently, linear codes with few weights have been widely studied, since they have applications in data storage systems, communication systems and consumer electronics.  In this paper, we present a class of three-weight and five-weight linear codes over $\mathbb{F}_p$, where $p$ is an odd prime and  $\mathbb{F}_{p}$ denotes a finite field with $p$ elements.
The weight distributions of the linear codes constructed in this paper are also settled.  Moreover, the linear codes illustrated in the paper may have applications in  secret sharing schemes.

\noindent\textbf{Keywords}  Linear code $\cdot$ Weight distribution $\cdot$ Gaussian sums $\cdot$ Weight enumerator $\cdot$ Secret sharing.
\subclass{94B05, 94B60 }
\end{abstract}

\section{Introduction and main results}
Let $q=p^{m}$ for an odd prime $p$ and a positive integer $m>2$.
Denote $ \mathbb{F}_q =\mathbb{F}_{p^{m}} $ the finite field with $ p^{m} $ elements and $\mathbb{F}_q^{*}=\mathbb{F}_{q}\backslash\{0\}$ the
multiplicative group of $\mathbb{F}_q$.

An $(n, M)$ code $\C$
over $\mathbb{F}_p$ is a subset of $\mathbb{F}_{p}^{n}$
 of size $M$. Among all kinds of codes, linear codes are  studied the most, since they are easier to describe,  encode and  decode than nonlinear codes.

A $[n,k,d]$ code $\C$ is called linear code over $\mathbb{F}_p$ if it is a $k$-dimensional subspace of
 $\mathbb{F}_{p}^{n}$ with minimum (Hamming) distance $d$. Usually, the vectors in $\C$ are called \textit{codewords}. The (Hamming) \textit{weight} $\wt(\mathbf{c})$
of a codeword $\mathbf{c}\in \C$
 is the number of nonzero coordinates in $\mathbf{c}$.
The \textit{weight enumerator} of $\C$ is a polynomial defined by
$$
1+A_1x+A_2x^{2}+\cdots+A_{n}x^{n},
$$
where $A_{i}$ denotes the number of codewords of weight $i$ in $\C$.
The \textit{weight distribution} $(A_0,A_1,\ldots, A_n)$    of $\C$ is of interest in coding theory and a lot of researchers are devoted to determining the weight distribution of specific codes. A code $\C$ is called a \textit{$t$-weight code} if $|\{i:A_i\neq 0,1\leq i\leq n\}|=t$. For the past decade years, a lot of codes with few weights are constructed \cite{Ding09,DY,DD14,DD15}.  Furthermore,  there is much literature on   the weight distribution of some special linear codes\cite{CKNC,Ding09,DLN,DY,LYL,LYF,ZDLZ,ZZD}.

Let $D=\{d_1,d_2,\ldots, d_n\} \subseteq \mathbb{F}_{q}$. A  linear code $\C_{D}$ of length $n$ over $\mathbb{F}_{p}$ is defined by
$$
\C_{D}=\{(\tr(xd_1), \tr(xd_2),\ldots, \tr(xd_{n})):x\in \mathbb{F}_{q}\},
$$
where $\tr$ denotes the absolute trace function over $\mathbb{F}_q$.
 The set $D$ is called the defining set of this code $ \C_{D}$. This construction was proposed by Ding et al. (see \cite{Ding15,DD14}) and  is
used  to obtain  linear codes with few weights \cite{DD15,WDX,X,ZLFH}.

In this paper, we set
\begin{eqnarray}\label{eq-a0}
        D=\{x\in \F_q^{*}: \tr(x^{2}+x)=0\}=\{d_1,d_2,\ldots, d_n\},  \nonumber \\
         \C_{D}=\{\mathbf{c}_{x}=(\tr(xd_1), \tr(xd_2), \ldots, \tr(xd_{n})): x\in \F_{q}\}
\end{eqnarray}
and determine the weight distribution of the proposed linear codes $\C_{D}$ of \eqref{eq-a0}.

 The parameters of the introduced linear codes $\C_{D}$ of \eqref{eq-a0} are described  in the following theorems. The proofs of the parameters will be presented later.

\begin{table}[ht]
\centering
\caption{The weight distribution of the codes of Theorem \ref{Thm1} }\label{tal:weightdistribution1}
\begin{tabular}{|l|l|}
\hline
\textrm{Weight} $w$ \qquad& \textrm{Multiplicity} $A$   \\
\hline
0 \qquad&   1  \\
\hline
$(p-1)p^{m-2}$ \qquad&  $p^{m-2}-1+p^{-1}(p-1)G$  \\
\hline
$(p-1)p^{m-2}+p^{-1}(p-1)G$  \qquad& $2(p-1)p^{m-2}-p^{-1}(p-1)G$\qquad\qquad  \\
\hline
$(p-1)p^{m-2}+p^{-1}(p-2)G$ \qquad&    $(p-1)^{2}p^{m-2}$ \\
\hline
\end{tabular}
\end{table}

\begin{Thm}\label{Thm1}
Let $m>2$ be even with $p\mid m. $ Then the code $\C_{D}$ of \eqref{eq-a0} is a $[p^{m-1}-1+p^{-1}(p-1)G, m]$ linear code with weight distribution in \autoref{tal:weightdistribution1},
where $G=-(-1)^{\frac{m(p-1)}{4}}p^{\frac{m}{2}}. $
\end{Thm}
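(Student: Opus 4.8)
The plan is to run the by-now standard additive-character / Gaussian-sum analysis for codes of the form $\C_D$. Write $\chi$ for the canonical additive character of $\F_q$, so that $\chi(a)=\zeta_p^{\tr(a)}$ for a fixed primitive complex $p$-th root of unity $\zeta_p$; let $\eta$ be the quadratic multiplicative character of $\F_q$ and $G(\eta,\chi)=\sum_{x\in\F_q^*}\eta(x)\chi(x)$ the quadratic Gaussian sum. Two elementary remarks will be the engine of the whole computation: since $m$ is even, $c^{(q-1)/2}=1$ for every $c\in\F_p^*$, so $\eta$ is trivial on $\F_p^*$; and since $p\mid m$ and $\tr(c)=mc$ for $c\in\F_p$, we have $\chi(c)=1$ for every $c\in\F_p$. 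Combined with the Davenport--Hasse lifting relation these give $G(\eta,\chi)=(-1)^{m-1}g^m=-(-1)^{m(p-1)/4}p^{m/2}=G$ (the constant in the statement), where $g$ is the quadratic Gaussian sum over $\F_p$ with $g^2=(-1)^{(p-1)/2}p$; and they yield the clean Weil-sum evaluation $\sum_{x\in\F_q}\chi(bx^2+ax)=\chi\!\big(-a^2/(4b)\big)\,G$ for $b\in\F_p^*$, which in particular equals $G$ whenever $a\in\F_p$.

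First I would compute the length. Expanding the indicator of $\tr(x^2+x)=0$ gives $n=|D|=\tfrac1p\sum_{y\in\F_p}\sum_{x\in\F_q^*}\chi\!\big(y(x^2+x)\big)$; the terms with $y\ne0$ are Weil sums of the above kind with linear coefficient in $\F_p$, hence each equals $G$, and one gets $n=\tfrac1p\big(q-1+(p-1)(G-1)\big)=p^{m-1}-1+p^{-1}(p-1)G$, as claimed. Next, for each $x\in\F_q$ I would write $\wt(\mathbf c_x)=n-N(x)$ with $N(x)=\#\{d\in\F_q^*:\tr(d^2+d)=0,\ \tr(xd)=0\}$, and expand $N(x)$ by orthogonality over two variables $y,z\in\F_p$. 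After evaluating the easy $y=0$ or $z=0$ contributions, the only surviving piece is $\sum_{y,z\in\F_p^*}\sum_{d\in\F_q}\chi\!\big(yd^2+(y+zx)d\big)$; completing the square and invoking the Weil evaluation rewrites it as $G\cdot S(x)$ with $S(x)=\sum_{y,z\in\F_p^*}\chi\!\big(-(y+zx)^2/(4y)\big)$. Expanding the square and using that $\chi$ is trivial on $\F_p$ collapses this to $S(x)=\sum_{y,z\in\F_p^*}\zeta_p^{-\frac z2\tr(x)-\frac{z^2}{4y}\tr(x^2)}$, which depends only on whether $\tr(x)$ and $\tr(x^2)$ vanish; a short four-way case check gives $S(x)=(p-1)^2,\,-(p-1),\,-(p-1),\,1$ in the cases $(\tr x,\tr x^2)$ both zero, only the second zero, only the first zero, neither zero. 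Substituting back produces $N(x)$, hence exactly the three nonzero weights $(p-1)p^{m-2}$, $(p-1)p^{m-2}+p^{-1}(p-1)G$ and $(p-1)p^{m-2}+p^{-1}(p-2)G$ of \autoref{tal:weightdistribution1}.

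Finally I would pin down the frequencies by the same method, computing $\#\{x\in\F_q:\tr(x)=0\}=p^{m-1}$, $\#\{x\in\F_q:\tr(x^2)=0\}=p^{m-1}+p^{-1}(p-1)G$ and $\#\{x\in\F_q:\tr(x)=\tr(x^2)=0\}=p^{m-2}+p^{-1}(p-1)G$; subtracting the appropriate overlaps and removing the point $x=0$ (which gives the zero codeword) yields the three multiplicities in the table, and a one-line check confirms they sum to $p^m$. Since $m>2$, each nonzero weight is at least $(p-1)\big(p^{m-2}-p^{m/2-1}\big)>0$, so $\mathbf c_x\ne\mathbf 0$ for $x\ne0$; hence $x\mapsto\mathbf c_x$ is injective and $\dim_{\F_p}\C_D=m$.

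I expect the main obstacle to be the evaluation of the double sum $S(x)$, together with the bookkeeping needed to keep the four-case weight computation and the frequency count mutually consistent; the other point that must be handled with care is the exact value of $G$ (including its sign) via the Davenport--Hasse relation, since that is precisely where both hypotheses ``$m$ even'' and ``$p\mid m$'' enter.
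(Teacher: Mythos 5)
Your proposal is correct and follows essentially the same route as the paper: length via orthogonality and the quadratic Weil sum, weights via $\wt(\mathbf c_b)=n_0-|N_b|$ with a Gauss-sum evaluation of the double character sum and a four-way case split on the vanishing of $\tr(b)$ and $\tr(b^2)$, and frequencies by counting those four classes (your inclusion--exclusion reproduces the paper's Lemmas \ref{Lem10} and \ref{Lem11}, and your $S(x)$ computation is the relevant case of Lemma \ref{Lem9}). All the intermediate values you report ($S(x)\in\{(p-1)^2,-(p-1),-(p-1),1\}$, the three weights, and the three multiplicities) agree with Lemma \ref{Lem13} and \autoref{tal:weightdistribution1}.
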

\begin{Exp}
Let $(p,m)=(3,6)$. Then the corresponding code $\mathcal{C}_{D}$ has parameters $[ 260,6,162]$ and weight enumerator
$1+98x^{162}+324x^{171}+306x^{180}$.
\end{Exp}
\begin{Thm}\label{Thm2}
Let $m$ be even with $p\nmid m. $ Then the code $\C_{D}$ of \eqref{eq-a0} is a $[p^{m-1}-p^{-1}G-1, m]$ linear code with weight distribution in \autoref{tal:weightdistribution2},
where $G=-(-1)^{\frac{m(p-1)}{4}}p^{\frac{m}{2}}. $
\end{Thm}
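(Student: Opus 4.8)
The plan is to use the standard additive-character method for trace codes with a quadratic defining set; the one new feature relative to Theorem~\ref{Thm1} is that when $p\nmid m$ the ``constant'' traces $\tr(c)=mc$ ($c\in\F_p$) no longer vanish, and this is what changes both the length and the list of weights. Fix a primitive complex $p$-th root of unity $\zeta_p$. First I would compute $n=|D|$ from
$n+1=\frac1p\sum_{x\in\F_q}\sum_{y\in\F_p}\zeta_p^{y\tr(x^2+x)}=p^{m-1}+\frac1p\sum_{y\in\F_p^{*}}\sum_{x\in\F_q}\zeta_p^{\tr(yx^2+yx)}$.
Completing the square in the exponent ($yx^2+yx=y(x+\tfrac12)^2-\tfrac y4$), the inner sum over $x$ becomes $\zeta_p^{-\tr(y/4)}\sum_{x\in\F_q}\zeta_p^{\tr(yx^2)}=\zeta_p^{-\tr(y/4)}\eta'(y)G$, where $\eta'$ is the quadratic character of $\F_q$ and $G=\sum_{x\in\F_q}\eta'(x)\zeta_p^{\tr(x)}$ is the quadratic Gauss sum, whose value $-(-1)^{m(p-1)/4}p^{m/2}$ for $m$ even is classical. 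For $y\in\F_p^{*}$ and $m$ even one has $\eta'(y)=1$ (every element of $\F_p^{*}$ is a square in $\F_q$), and since $p\nmid m$ the exponent $-\tr(y/4)=-my/4$ runs over $\F_p^{*}$ with $y$, so $\sum_{y\in\F_p^{*}}\zeta_p^{-\tr(y/4)}=-1$ and $n=p^{m-1}-p^{-1}G-1$, as stated.

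Next, for $a\in\F_q$ I would write $\wt(\mathbf{c}_a)=n-N(a)$ with $N(a)=\#\{x\in\F_q^{*}:\tr(x^2+x)=0,\ \tr(ax)=0\}$ and expand $N(a)=\frac1{p^2}\sum_{y,z\in\F_p}\bigl(\sum_{x\in\F_q}\zeta_p^{\tr(yx^2+(y+za)x)}-1\bigr)$. Completing the square once more in $x$ for $y\ne0$ gives $\sum_{x\in\F_q}\zeta_p^{\tr(yx^2+(y+za)x)}=\zeta_p^{-\tr((y+za)^2/(4y))}G$, whence for $a\ne0$ one gets $N(a)=p^{m-2}-1+p^{-2}G\,T(a)$ with $T(a):=\sum_{y\in\F_p^{*}}\sum_{z\in\F_p}\zeta_p^{-\tr((y+za)^2/(4y))}$. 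Expanding $(y+za)^2/(4y)=\tfrac y4+\tfrac{za}2+\tfrac{z^2a^2}{4y}$ and taking traces, the exponent equals $-\tfrac{my}{4}-\tfrac z2\tr(a)-\tfrac{z^2}{4y}\tr(a^2)$, so $T(a)$ depends on $a$ only through $u:=\tr(a)$ and $v:=\tr(a^2)$. Evaluating the inner sum over $z$ (a quadratic Gauss sum over $\F_p$ when $v\ne0$), then the sum over $y$, and using $\eta(4)=1$ and $g(\eta)^2=\eta(-1)p$ for the quadratic character $\eta$ of $\F_p$, I would obtain: $T(a)=-p$ if $u=v=0$; $T(a)=0$ if ($v=0,\ u\ne0$) or ($v\ne0,\ u^2=mv$); and $T(a)=p\,\eta(u^2-mv)$ if $v\ne0,\ u^2\ne mv$. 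Since $\wt(\mathbf{c}_a)=(p-1)p^{m-2}-p^{-1}G-p^{-2}G\,T(a)$, the nonzero weights are exactly $w_1=(p-1)p^{m-2}$, $w_2=(p-1)p^{m-2}-p^{-1}G$ and $w_3=(p-1)p^{m-2}-2p^{-1}G$; for $m\ge4$ these lie in $(0,n]$ and are pairwise distinct (using $G^2=p^m$), so $\mathbf{c}_a=\mathbf{0}$ forces $a=0$ and $\dim\C_D=m$.

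To read off the multiplicities I would note that, by the previous step, $\wt(\mathbf{c}_a)$ is a function of the pair $(u,v)=(\tr(a),\tr(a^2))$, so each $A_{w_i}$ is the sum of $\nu(u,v):=\#\{a\in\F_q:\tr(a)=u,\ \tr(a^2)=v\}$ over the $(u,v)$ in the corresponding region. The same square-completion/Gauss-sum computation gives $\nu(u,v)=p^{m-2}$ when $u^2=mv$ and $\nu(u,v)=p^{m-2}+p^{-1}G\,\eta(u^2-mv)$ otherwise. Hence, besides $\nu(0,0)=p^{m-2}$ (which, after removing $a=0$, supplies the $u=v=0$ contribution to $A_{w_1}$), everything reduces to counting, for each $\varepsilon\in\{1,0,-1\}$, the pairs $(u,v)$ with $v\ne0$ and $\eta(u^2-mv)=\varepsilon$; this follows from the elementary conic-point identity $\sum_{u\in\F_p}\eta(u^2-c)=-1$, valid for all $c\in\F_p^{*}$, giving $(p-1)(p-2)/2$ pairs with $\varepsilon=1$, $p(p-1)/2$ with $\varepsilon=-1$, and $p-1$ with $u^2=mv$, $v\ne0$. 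Assembling these produces the entries of \autoref{tal:weightdistribution2}; as a check one verifies $\sum_i A_i=p^{m}$ and the first Pless power-moment identity $\sum_i w_iA_{w_i}=(p-1)p^{m-1}n$ (again using $G^2=p^m$).

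The difficulty here is bookkeeping rather than ideas: one must carry the quadratic-character factors correctly through two nested completions of the square and an inner Gauss sum over $\F_p$, and — in contrast to the $p\mid m$ case of Theorem~\ref{Thm1}, where the analogous traces $my/4$, $ms^2/(4y)$ all vanish — keep the surviving terms $my/4$ and $u^2-mv$ intact. It is precisely the appearance of $\eta(u^2-mv)$ (rather than $\eta(u^2)$) that forces the identity $\sum_u\eta(u^2-c)=-1$ and shapes the final distribution. A secondary point to dispatch is confirming that the three weights are positive, at most $n$, and distinct, which needs only $m\ge4$ (automatic since $m>2$ is even) and simultaneously justifies $\dim\C_D=m$.
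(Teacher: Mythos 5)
Your proposal is correct, and the first two-thirds of it (length and weight values) follows essentially the same route as the paper: expand the indicator functions with additive characters, complete the square, and evaluate via the quadratic Gauss sums $G$ and $\overline G$ (the paper's Lemmas \ref{Lem7}--\ref{Lem9} and \ref{Lem14}); your reduction of everything to the pair $(u,v)=(\tr(b),\tr(b^2))$ and the quantity $\eta(u^2-mv)$ is exactly what Lemma \ref{Lem14} encodes. Where you genuinely diverge is the multiplicity count. The paper computes only $A_{\omega_1}=|N(0,\overline 0)|+|V|$ directly (Lemmas \ref{Lem11} and \ref{Lem12}), then observes that $\C_D^{\perp}$ has minimum distance $2$ and solves the first two Pless power moments \eqref{eq-3.2} for the remaining two frequencies. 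You instead compute the full joint distribution $\nu(u,v)=\#\{a:\tr(a)=u,\ \tr(a^2)=v\}=p^{m-2}+p^{-1}G\,\overline\eta(u^2-mv)$ (with the convention $\overline\eta(0)=0$) and count, for each sign $\varepsilon$, the pairs with $\overline\eta(u^2-mv)=\varepsilon$ via the conic identity $\sum_{u\in\F_p}\overline\eta(u^2-c)=-1$ for $c\neq 0$; I checked that this yields $(p-1)(2p^{m-2}+p^{-1}G)$, $\tfrac12(p-1)(p^{m-1}-G)+p^{m-2}-1$ and $\tfrac12(p-1)(p-2)(p^{m-2}+p^{-1}G)$, matching \autoref{tal:weightdistribution2}. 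Your route is more self-contained and gives all three frequencies by one uniform count (and yields the power-moment identities as a consistency check rather than an input); the paper's route avoids the refined count of $\nu(u,v)$ by sign at the cost of having to verify that $d(\C_D^{\perp})=2$, a step you do not need. One presentational caveat: your claim that $\nu(u,v)$ follows ``by the same square-completion/Gauss-sum computation'' is true but is doing real work (it is a strict refinement of the paper's Lemmas \ref{Lem10}--\ref{Lem12}), so in a final write-up that computation should be displayed rather than asserted.
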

\begin{Exp}
Let $(p,m)=(3,4)$. Then the corresponding code $\mathcal{C}_{D}$ has parameters $[ 29,4,18]$ and weight enumerator
$1+44x^{18}+30x^{21}+6x^{24}$. This code is optimal according to the codetables in \cite{MG}.
\end{Exp}

\begin{table}[h!]
\centering
\caption{The weight distribution of the codes of Theorem \ref{Thm2}.}\label{tal:weightdistribution2}
\begin{tabular}{|l|l|}
\hline
\textrm{Weight} $w$ \qquad& \textrm{Multiplicity} $A$   \\
\hline
0 \qquad&   1  \\
\hline
$(p-1)p^{m-2}-p^{-1}G$ \qquad\qquad&  $(p-1)(2p^{m-2}+p^{-1}G)$ \qquad \\
\hline
$(p-1)p^{m-2}$  \qquad& $\frac{1}{2}(p-1)(p^{m-1}-G)+p^{m-2}-1$ \qquad \\
\hline
$(p-1)p^{m-2}-2p^{-1}G$ \qquad\qquad &    $\frac{1}{2}(p^{2}-3p+2)(p^{m-2}+p^{-1}G)\qquad$\\
\hline
\end{tabular}
\end{table}

\begin{table}[ht]
\centering
\caption{The weight distribution of the codes of Theorem \ref{Thm3}.}\label{tal:weightdistribution3}
\begin{tabular}{|l|l|}
\hline
\textrm{Weight} $w$ \qquad& \textrm{Multiplicity} $A$   \\
\hline
0 \qquad&   1  \\
\hline
$(p-1)p^{m-2}$ \qquad&  $p^{m-1}-1$  \\
\hline
$(p-1)p^{m-2}+p^{\frac{m-3}{2}}$  \qquad& $\frac{1}{2}(p-1)^{2}p^{m-2}$ \\
\hline
$(p-1)p^{m-2}-p^{\frac{m-3}{2}}$ \qquad&  $\frac{1}{2}(p-1)^{2}p^{m-2}$  \\
\hline
$(p-1)p^{m-2}-(p-1)p^{\frac{m-3}{2}}$ \qquad\qquad&  $\frac{1}{2}(p-1)(p^{m-2}+p^{\frac{m-1}{2}})$ \qquad\qquad \\
\hline
$(p-1)p^{m-2}+(p-1)p^{\frac{m-3}{2}}$  \qquad& $\frac{1}{2}(p-1)(p^{m-2}-p^{\frac{m-1}{2}})$ \\
\hline
\end{tabular}
\end{table}

\begin{Thm}\label{Thm3}
If $m$ is odd and  $ p\mid m, $ then the linear code $\C_{D}$  of \eqref{eq-a0} has parameters $[p^{m-1}-1, m]$  and weight distribution in \autoref{tal:weightdistribution3}.
\end{Thm}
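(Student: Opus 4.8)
The plan is to run the standard Gaussian--sum analysis for codes of the form $\C_D$, exploiting the two features special to this case. Since $p\mid m$, the trace of any element of $\F_p$ is $0$ (because $\tr(c)=mc$ for $c\in\F_p$), and since $m$ is odd the Davenport--Hasse lifting relation reads $G=(-1)^{m-1}g^{m}=g^{m}$, where $g=\sum_{t\in\F_p^{*}}\eta_0(t)\zeta_p^{t}$ is the quadratic Gaussian sum over $\F_p$, $G=\sum_{x\in\F_q^{*}}\eta(x)\chi(x)$ the one over $\F_q$, $\eta_0,\eta$ the quadratic characters of $\F_p,\F_q$, and $\chi(x)=\zeta_p^{\tr(x)}$ the canonical additive character; recall also $g^{2}=\eta_0(-1)p$ and $\eta(a)=\eta_0(a)$ for $a\in\F_p^{*}$ (both because $m$ is odd). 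The one computational identity used throughout is the Weil evaluation $\sum_{x\in\F_q}\chi(cx^{2}+bx)=\chi\!\big(-b^{2}(4c)^{-1}\big)\,\eta(c)\,G$ for $c\neq0$. I would first pin down the length: writing $n=|D|=\tfrac1p\sum_{x\in\F_q^{*}}\sum_{y\in\F_p}\zeta_p^{y\tr(x^{2}+x)}$, applying the Weil evaluation to the inner sum over $x$ (the would-be constant $\tr(y/4)$ vanishes since $p\mid m$) and using $\sum_{y\in\F_p^{*}}\eta(y)=\sum_{y\in\F_p^{*}}\eta_0(y)=0$, all Gaussian-sum contributions cancel and one is left with $n=p^{m-1}-1$.

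Next, for $x\in\F_q$ I would write $\wt(\mathbf c_x)=n-Z(x)$ with $Z(x)=|\{y\in\F_q^{*}:\tr(y^{2}+y)=\tr(xy)=0\}|$, and expand $Z(x)=\tfrac1{p^{2}}\sum_{y\in\F_q^{*}}\sum_{a,b\in\F_p}\zeta_p^{\tr(ay^{2}+(a+bx)y)}$. Separating $a=0$ from $a\neq0$, completing the square in $y$ over $\F_q$, and then --- this is where $p\mid m$ enters decisively --- reducing the remaining $\F_q$-exponentials to exponentials of a quadratic in $b$ over $\F_p$ (using $\tr(cb)=b\,\tr(c)$, $\tr(cb^{2})=b^{2}\tr(c)$ and $\tr\equiv 0$ on $\F_p$), one finds $Z(x)$ depends only on $\tr(x)$ and $\tr(x^{2})$. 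Evaluating the last quadratic Gaussian sum in $b$ over $\F_p$ and collecting $Gg=g^{m+1}=(\eta_0(-1)p)^{(m+1)/2}$ into the single sign $\delta:=(-1)^{(p-1)(m+3)/4}=\pm1$, I expect: $\wt(\mathbf c_x)=0$ if $x=0$; $\wt(\mathbf c_x)=(p-1)p^{m-2}$ if $x\neq0$ and $\tr(x^{2})=0$; $\wt(\mathbf c_x)=(p-1)p^{m-2}-(p-1)\delta\,\eta_0(\tr(x^{2}))\,p^{(m-3)/2}$ if $\tr(x^{2})\neq0,\ \tr(x)=0$; and $\wt(\mathbf c_x)=(p-1)p^{m-2}+\delta\,\eta_0(\tr(x^{2}))\,p^{(m-3)/2}$ if $\tr(x^{2})\neq0,\ \tr(x)\neq0$.

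Finally I would do the counting. Elementary character sums give $|\{x:\tr(x^{2})=0\}|=p^{m-1}$ and $|\{x:\tr(x^{2})=\tr(x)=0\}|=p^{m-2}$, hence $|\{x:\tr(x^{2})\neq0,\tr(x)=0\}|=(p-1)p^{m-2}$ and $|\{x:\tr(x^{2})\neq0,\tr(x)\neq0\}|=(p-1)^{2}p^{m-2}$; in particular the weight $(p-1)p^{m-2}$ occurs $p^{m-1}-1$ times. To split the remaining two families according to the quadratic residuosity of $\tr(x^{2})$, I would expand $\eta_0$ as a Gaussian sum and compute $\sum_{\tr(x)=0}\eta_0(\tr(x^{2}))=\sum_{x\in\F_q}\eta_0(\tr(x^{2}))=(p-1)\delta\,p^{(m-1)/2}$ (again using $p\mid m$ and $G/g=g^{m-1}$), whence $\sum_{\tr(x)\neq0}\eta_0(\tr(x^{2}))=0$. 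Combining with the weight formula yields exactly the multiplicities of \autoref{tal:weightdistribution3}; a consistency check is that these multiplicities plus $1$ sum to $p^{m}$. Since the smallest nonzero weight, $(p-1)p^{m-2}-(p-1)p^{(m-3)/2}=(p-1)p^{(m-3)/2}\big(p^{(m-1)/2}-1\big)$, is positive for $m>2$, no nonzero $x$ yields the zero codeword, so $x\mapsto\mathbf c_x$ is injective and $\dim\C_D=m$.

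The main obstacle is the two-stage double-sum evaluation of $Z(x)$: completing the square once in $y$ over $\F_q$ and once in $b$ over $\F_p$, keeping track of exactly which constant terms die because $p\mid m$, and assembling $G$, $g$, $\eta$, $\eta_0$ and $\eta_0(-1)$ with the correct signs into the single parameter $\delta$. Everything downstream --- the counting of defining-set elements and of residuosity classes, and the dimension check --- is routine once that weight formula is in hand.
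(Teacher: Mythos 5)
Your proposal is correct and follows essentially the same route as the paper: evaluate the length and the counts $Z(x)$ via the quadratic Weil sum (the paper's Lemmas \ref{Lem7}--\ref{Lem9} and \ref{Lem15}), observe that the weight depends only on $\tr(x)$, $\tr(x^{2})$ and $\overline{\eta}(\tr(x^{2}))$, and then determine multiplicities from the counts of $x$ with prescribed $\tr(x^{2})=c$ (and $\tr(x)=0$), which is exactly the content of the paper's Lemmas \ref{Lem16} and \ref{Lem17}; your signed sums $\sum\eta_{0}(\tr(x^{2}))$ are just an equivalent repackaging of the paper's system \eqref{eq-3.3}, and your weight formula with $\delta=(-1)^{(p-1)(m+3)/4}$ agrees with the paper's $\overline{\eta}(-1)p^{-2}G\overline{G}=\pm p^{(m-3)/2}$.
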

\begin{Exp}
Let $(p,m)=(3,3)$. Then the corresponding code $\mathcal{C}_{D}$ has parameters $[ 8,3,4]$ and weight enumerator
$1+6x^4+6x^5+8x^6+6x^7$. This code is almost optimal, since the optimal linear code has parameters $[8,3,5]$. By \autoref{tal:weightdistribution3}, $\mathcal{C}_{D}$ in Theorem \ref{Thm3} is a four weight linear code if and only if $p=m=3$.
\end{Exp}
\begin{Exp}
Let $(p,m)=(5,5)$. Then the corresponding code $\mathcal{C}_{D}$ has parameters  $[ 624,5,480]$ and weight enumerator
$1+300x^{480}+1000x^{495}+624x^{500}+1000x^{505}+200x^{520}$.
\end{Exp}
\begin{table}[ht]
\centering
\caption{The weight distribution of the codes of Theorem \ref{Thm4}.}\label{tal:weightdistribution4}
\begin{tabular}{|l|l|}
\hline
\textrm{Weight} $w$ \qquad& \textrm{Multiplicity} $A$   \\
\hline
0 \qquad&   1  \\
\hline
$(p-1)p^{m-2}+\frac{1}{p}(\frac{-m}{p})G\overline{G}$ \qquad&  $(p-1)(p^{m-2}-p^{-2}(\frac{-m}{p})G\overline{G})$  \\
\hline
$(p-1)p^{m-2}$  \qquad& $p^{m-2}+p^{-2}(\frac{-m}{p})(p-1)G\overline{G}-1$ \\
\hline
$(p-1)p^{m-2}+(\frac{-m}{p})p^{-2}(p-1)G\overline{G}$ \qquad&  $\frac{1}{2}(p-1)(p^{m-1}-(\frac{-m}{p})p^{-1}G\overline{G})$  \\
\hline
$(p-1)p^{m-2}+(\frac{-m}{p})p^{-2}(p+1)G\overline{G}$ \qquad\qquad &  $\frac{1}{2}(p-1)(p-2)(p^{m-2}-(\frac{-m}{p})p^{-2}G\overline{G})$\qquad   \\
\hline
$(p-1)p^{m-2}+p^{-2}(\frac{-m}{p})G\overline{G}$  \qquad& $(p-1)p^{m-2}+p^{-2}(\frac{-m}{p})(p-1)^{2}G\overline{G}$ \qquad \\
\hline
\end{tabular}
\end{table}

\begin{Thm}\label{Thm4}
If $m$ is odd and  $ p\nmid m, $ then the linear code $\C_{D}$ of \eqref{eq-a0} has parameters $[p^{m-1}+p^{-1}\left(\frac{-m}{p}\right)G\overline{G}-1, m]$  and weight distribution in \autoref{tal:weightdistribution4}, where $\left(\frac{\cdot}{\cdot}\right)$ is the Legendre symbol and $ G\overline{G}=(-1)^{\frac{(m+1)(p-1)}{4}}p^{\frac{m+1}{2}}. $
\end{Thm}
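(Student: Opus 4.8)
The plan is to follow the now-standard route for codes of the form $\C_D$: first compute the length $n=|D|$, then write $\wt(\mathbf c_x)=n-N(x)$ for $x\in\F_q^{*}$, where $N(x)=\#\{d\in\F_q^{*}:\tr(d^{2}+d)=0,\ \tr(xd)=0\}$, evaluate $N(x)$ by additive characters and quadratic Gauss sums, and finally count how many $x$ produce each weight. Throughout I write $\zeta_p=e^{2\pi i/p}$, let $\eta$ be the quadratic multiplicative character of $\F_q$, and use: (i) $\sum_{x\in\F_q}\zeta_p^{\tr(ax^{2})}=\eta(a)G$ for $a\in\F_q^{*}$, with $G$ the quadratic Gauss sum of $\F_q$; (ii) $\tr(a)=ma$ for $a\in\F_p$; (iii) since $m$ is odd, $\eta|_{\F_p^{*}}$ equals the Legendre symbol $\left(\tfrac{\cdot}{p}\right)$, and the Davenport--Hasse relation gives $G=(-1)^{m-1}\overline G^{\,m}=\overline G^{\,m}$, whence $G\overline G=\overline G^{\,m+1}=\left((-1)^{(p-1)/2}p\right)^{(m+1)/2}=(-1)^{(m+1)(p-1)/4}p^{(m+1)/2}$; (iv) $\sum_{z\in\F_p}\zeta_p^{\alpha z^{2}+\beta z}=\left(\tfrac{\alpha}{p}\right)\overline G\,\zeta_p^{-\beta^{2}/(4\alpha)}$ for $\alpha\in\F_p^{*}$.

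For the length I would expand $n=\tfrac1p\sum_{y\in\F_p}\sum_{x\in\F_q^{*}}\zeta_p^{y\tr(x^{2}+x)}$, peel off $y=0$, and for $y\neq0$ complete the square, $yx^{2}+yx=y(x+2^{-1})^{2}-y\,4^{-1}$, and apply (i)--(iii) to reduce the inner sum to $\left(\tfrac yp\right)\zeta_p^{-my/4}G$; the substitution $w=-my/4$ (legitimate because $p\nmid m$) turns $\sum_{y\neq0}\left(\tfrac yp\right)\zeta_p^{-my/4}$ into $\left(\tfrac{-m}{p}\right)\overline G$, and collecting terms gives $n=p^{m-1}-1+p^{-1}\left(\tfrac{-m}{p}\right)G\overline G$. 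For the weights I would expand $N(x)=\tfrac1{p^{2}}\sum_{y,z\in\F_p}\sum_{d\in\F_q^{*}}\zeta_p^{\,y\tr(d^{2}+d)+z\tr(xd)}$, dispatch the ranges $y=z=0$, $y=0\neq z$, $y\neq0=z$ (elementary; they contribute constants independent of $x$), and for $y\neq0\neq z$ complete the square in $d$ inside $\zeta_p^{\tr(yd^{2}+(y+zx)d)}$, which yields a factor $\left(\tfrac yp\right)G\,\zeta_p^{-\tr((y+zx)^{2}/(4y))}$. Expanding $(y+zx)^{2}/(4y)=y/4+zx/2+z^{2}x^{2}/(4y)$ and using (ii) shows that $N(x)$, hence $\wt(\mathbf c_x)$, depends on $x$ only through $a:=\tr(x)$ and $b:=\tr(x^{2})$, and in fact $\wt(\mathbf c_x)=(p-1)p^{m-2}+p^{-2}(p-1)\left(\tfrac{-m}{p}\right)G\overline G-p^{-2}G\,T(a,b)$ with $T(a,b)=\sum_{y,z\in\F_p^{*}}\left(\tfrac yp\right)\zeta_p^{-my/4-za/2-z^{2}b/(4y)}$. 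Completing the square in $z$ (via (iv) when $b\neq0$) and then evaluating the residual sum over $y$ reduces $T(a,b)$ to a short combination of Legendre symbols, and the case split $b=0$ (then $a=0$ vs.\ $a\neq0$) versus $b\neq0$ (then $a^{2}=mb$ vs.\ $a^{2}\neq mb$, and in the latter case $\left(\tfrac{mb}{p}\right)=1$ vs.\ $-1$) produces precisely the five nonzero weights of Table~\ref{tal:weightdistribution4}; here $p\nmid m$ is used repeatedly, e.g.\ to note that among $(a,b)$ with $a^{2}=mb$ one has $a=0\Leftrightarrow b=0$.

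For the frequencies I would compute, for each $(a,b)\in\F_p^{2}$, the number $n(a,b):=\#\{x\in\F_q:\tr(x)=a,\ \tr(x^{2})=b\}$ by the same method (again completing the square twice), obtaining $n(a,b)=p^{m-2}+p^{-2}(p-1)\left(\tfrac{-m}{p}\right)G\overline G$ when $a^{2}=mb$ and $n(a,b)=p^{m-2}-p^{-2}\left(\tfrac{-m}{p}\right)G\overline G$ otherwise. Summing $n(a,b)$ over the regions found above --- $p-1$ pairs in $\{b=0,\ a\neq0\}$; $p-1$ pairs in $\{b\neq0,\ a^{2}=mb\}$; $\tfrac{p-1}{2}\cdot p$ pairs in $\{b\neq0,\ a^{2}\neq mb,\ \left(\tfrac{mb}{p}\right)=-1\}$; $\tfrac{p-1}{2}\cdot(p-2)$ pairs in $\{b\neq0,\ a^{2}\neq mb,\ \left(\tfrac{mb}{p}\right)=1\}$; and subtracting $1$ from $n(0,0)$ to delete the zero codeword $\mathbf c_0$ --- reproduces exactly the multiplicities of Table~\ref{tal:weightdistribution4}, with $\sum_w A_w=p^{m}$ as a sanity check. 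Finally, since $m>2$ and $p\geq3$ one has $(p-1)p^{m-2}>(p+1)p^{(m-3)/2}$, which dominates every correction term, so all five weights are strictly positive; hence $\mathbf c_x\neq\mathbf 0$ for every $x\neq0$, the map $x\mapsto\mathbf c_x$ is injective, and $\dim\C_D=m$.

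The main obstacle is the bookkeeping in the double character sum for $N(x)$, and likewise for $n(a,b)$: one completes the square twice, must keep careful track of the quadratic characters and of which elements of $\F_p$ are squares, and must then partition the $(a,b)$-plane correctly and match each region simultaneously to one of the five weights and to its frequency. None of the individual steps is deep, but errors creep into the bookkeeping; the roles of ``$m$ odd'' (so that $\eta|_{\F_p^{*}}$ is the Legendre symbol and $(-1)^{m-1}=1$) and ``$p\nmid m$'' (so that $m\neq0$ in $\F_p$ and every substitution and case split is legitimate) must be carried through the whole computation.
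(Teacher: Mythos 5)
Your proposal is correct, and the first half --- computing the length via $\sum_{y}\sum_{x}\zeta_p^{y\tr(x^2+x)}$ and showing via completion of the square that $\wt(\mathbf c_x)$ depends only on the pair $\bigl(\tr(x),\tr(x^2)\bigr)$, with the five weights separated by the cases $\tr(x^2)=0$ vs.\ not, $(\tr x)^2=m\tr(x^2)$ vs.\ not, and $\overline{\eta}\bigl(m\tr(x^2)\bigr)=\pm1$ --- is exactly the paper's route (its Lemmas \ref{Lem8}, \ref{Lem9} and \ref{Lem18}). Where you genuinely diverge is in the frequency count. The paper computes only three of the five multiplicities directly (via Lemmas \ref{Lem10}--\ref{Lem12}, i.e.\ the counts $|N(0,\overline 0)|$, $|N(0,0)|$ and $|V|$), then observes that the dual code has minimum distance $2$ and recovers the remaining two multiplicities from the first two Pless power moments \eqref{eq-a1}. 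You instead evaluate the full two-parameter count $n(a,b)=\#\{x:\tr(x)=a,\ \tr(x^2)=b\}$ and find the clean dichotomy $n(a,b)=p^{m-2}+p^{-2}(p-1)\left(\frac{-m}{p}\right)G\overline G$ if $a^2=mb$ and $p^{m-2}-p^{-2}\left(\frac{-m}{p}\right)G\overline G$ otherwise (I checked this; the $\overline\eta(-b)\overline G$ terms from the $z=0$ range cancel against the cross terms, and your formula specializes correctly to the paper's Lemmas \ref{Lem10} and \ref{Lem12}), after which all five multiplicities follow from counting lattice points in the $(a,b)$-plane: $p-1$, $1$, $p(p-1)/2$, $(p-1)(p-2)/2$ and $p-1$ pairs respectively. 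Your version is more self-contained --- it avoids having to verify that $d(\C_D^{\perp})=2$ and gives $\sum_w A_w=p^m$ as a built-in consistency check --- at the cost of one more character-sum evaluation; the paper's power-moment shortcut trades that computation for the dual-distance claim. Your closing remark that all five weights are positive (so $x\mapsto\mathbf c_x$ is injective and $\dim\C_D=m$) is a point the paper leaves implicit.
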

\begin{Exp}
Let $(p,m)=(3,5)$. Then the corresponding code $\mathcal{C}_{D}$ has parameters $[ 71,5,42]$ and weight enumerator
$1+30x^{42}+60x^{45}+90x^{48}+42x^{51}+20x^{54}$. We remark that this linear code is near optimal, since the corresponding optimal linear codes has parameters $[71,5,42]$.
\end{Exp}
{\bf Remark: }
In Theorem \ref{Thm4}, if $m=3$ and $ p \equiv 2 \mod3, $ the frequency of weight $ (p-1)p^{m-2} $
turns to be zero. Hence, in this case $ \C_{D} $ is a  four-weight linear code  with weight distribution in \autoref{tal:weightdistribution5}.
\begin{Exp}
Let $(p,m)=(5,3)$. Then the corresponding code $\mathcal{C}_{D}$ has parameters $[ 19,3,14]$ and weight enumerator
$1+36x^{14}+24x^{15}+60x^{16}+4x^{19}$. This code is optimal according to the datatables  in \cite{MG}.
\end{Exp}

\begin{table}[ht]
\centering
\caption{ The weight distribution of  $\C_{D}$, when   $m=3$ and $ p \equiv 2\pmod3$.}\label{tal:weightdistribution5}
\begin{tabular}{|l|l|}
\hline
\textrm{Weight} $w$\qquad \qquad\qquad& \textrm{Multiplicity} $A$ \qquad\qquad \qquad \\
\hline
0 \qquad\qquad\qquad\qquad&   1\qquad\qquad \qquad\qquad \\
\hline
$p^{2}-2p$\qquad\qquad \qquad&  $p^{2}-1$\qquad\qquad\qquad  \\
\hline
$p^{2}-2p+1$\qquad\qquad\qquad  \qquad& $\frac{1}{2}p(p^{2}-1)$\qquad\qquad \qquad\qquad\\
\hline
$p^{2}-2p-1$\qquad \qquad\qquad\qquad\qquad\qquad&  $\frac{1}{2}(p-2)(p^{2}-1)$\qquad\qquad \qquad\qquad\qquad\qquad \\
\hline
$p^{2}-p-1$\qquad\qquad \qquad\qquad&  $p-1$\qquad\qquad \qquad\qquad\qquad \\
\hline
\end{tabular}
\end{table}

\section{Preliminaries}
\label{}
In this section, we review some basic notations and results of group characters and
present some lemma which are needed for the proof of the main results.

An additive character $\chi$ of $\mathbb{F}_{q}$ is a mapping from $\mathbb{F}_{q}$ into the multiplicative group of complex numbers of absolute value $1$ with $\chi(g_1g_2)=\chi(g_1)\chi(g_2)$ for all  $g_1 , g_2\in \mathbb{F}_{q}$  \cite{LN}.

By Theorem 5.7 in \cite{LN}, for $b\in \mathbb{F}_{q}$,
\begin{equation}\label{eq-character}
\chi_{b}(x)=e^{\frac{2\pi \sqrt{-1}\tr(bx)}{p}}, \qquad\textrm{for\ all \ } x\in \mathbb{F}_{q}
\end{equation}
defines an additive character of $\mathbb{F}_{q}$, and all additive characters can be obtained in this way. Among the additive characters, we have the \textit{trivial character}    $\chi_0$ defined by $\chi_0(x)=1$ for all $x\in \mathbb{F}_{q}$; all other characters are called \textit{nontrivial}.  The character $\chi_1$ in \eqref{eq-character} will be called the \textit{canonical additive character}  of $\mathbb{F}_{q}$ \cite{LN}.

 The orthogonal property of additive characters   can be found  in  \cite{LN} and   is given as below
$$
\sum_{x\in \mathbb{F}_{q}}\chi(x)=\left\{\begin{array}{ll}
                                           q, & \textrm{if\ } \chi \textrm{\ is\ trivial},  \\
                                           0, & \textrm{if\ } \chi \textrm{\ is\ nontrivial}.
                                         \end{array}
                                         \right.
$$

Characters of the \textit{multiplicative group} $\mathbb{F}_{q}^{*}$ of $\mathbb{F}_{q}$  are called multiplicative character of $\mathbb{F}_{q}$. By Theorem 5.8 in \cite{LN}, for each $j=0,1,\ldots, q-2$, the function $\psi_{j}$ with
$$
\psi_{j}(g^{k})=e^{2\pi\sqrt{-1}jk/(q-1)} \textrm{for }\ k=0,1,\ldots, q-2
$$
defines a multiplicative character of $\mathbb{F}_{q}$, where $g$ is a generator of $\mathbb{F}_{q}^{*}$.
For $j=(q-1)/2$,  we have the \textit{quadratic character} $\eta= \psi_{(q-1)/2}$  defined by
$$
\eta(g^{k})=\left\{\begin{array}{ll}
                     -1, & \textrm{if } 2\nmid k,  \\
                     1,  & \textrm{if }   2\mid k.
                   \end{array}
                   \right.
$$
In the sequel,  we assume that $ \eta(0)=0 $.

We define the quadratic Gauss sum $G=G(\eta, \chi_1)$ over $\mathbb{F}_{q}$ by
$$G(\eta, \chi_1)=\sum_{x\in \mathbb{F}_{q}^{*}}\eta(x)\chi_1(x),$$
and the quadratic Gauss sum $\overline{G}=G(\overline{\eta}, \overline{\chi}_1)$ over $\mathbb{F}_{p}$ by
$$
G(\overline{\eta},\overline{\chi}_{1})=\sum_{x\in \mathbb{F}_{p}^{*}}\overline{\eta}(x)\overline{\chi}_1(x),
$$
where $\overline{\eta}$ and $\overline{\chi}_1$ denote the quadratic and canonical character of $\mathbb{F}_{p}, $ respectively.

The explicit values of quadratic Gauss sums are given as follows.
\begin{Lem}[\cite{LN}, Theorem 5.15]\label{Lem5}
Let the symbols be the same as before. Then
$$
G(\eta, \chi_1)=(-1)^{(m-1)}\sqrt{-1}^{\frac{(p-1)^{2}m}{4}}\sqrt{q}, \ \ G(\overline{\eta}, \overline{\chi}_1)=\sqrt{-1}^{\frac{(p-1)^{2}}{4}}\sqrt{p}.
$$
\end{Lem}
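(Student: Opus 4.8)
The plan is to reduce the statement to two classical ingredients: the evaluation of the quadratic Gauss sum over the prime field $\F_p$, and the Davenport--Hasse lifting relation that transports that value to $\F_q=\F_{p^m}$. I would treat the prime field first. Writing $\zeta_p=e^{2\pi\sqrt{-1}/p}$ and using that each nonzero $x\in\F_p$ equals $t^{2}$ for exactly $1+\overline\eta(x)$ values of $t$, the orthogonality relation from the Preliminaries gives
$$
\sum_{t=0}^{p-1}\zeta_p^{t^{2}}=1+\sum_{x\in\F_p^{*}}\bigl(1+\overline\eta(x)\bigr)\zeta_p^{x}=G(\overline\eta,\overline\chi_1),
$$
since $\sum_{x\in\F_p^{*}}\zeta_p^{x}=-1$. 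Substituting $y=xz$ in the double sum defining $G(\overline\eta,\overline\chi_1)^{2}$ and applying orthogonality once more yields $G(\overline\eta,\overline\chi_1)^{2}=\overline\eta(-1)\,p=(-1)^{(p-1)/2}p$. Hence $G(\overline\eta,\overline\chi_1)=\pm\sqrt p$ if $p\equiv1\pmod4$ and $\pm\sqrt{-1}\,\sqrt p$ if $p\equiv3\pmod4$; both cases are captured by $\pm\,\sqrt{-1}^{(p-1)^2/4}\sqrt p$, since $(p-1)^2/4=((p-1)/2)^2\equiv0$ or $1\pmod4$ according to the residue of $p$ modulo $4$.

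The genuinely hard point, and the main obstacle, is to show that the sign is $+$, i.e. that $G(\overline\eta,\overline\chi_1)=\sqrt{-1}^{(p-1)^2/4}\sqrt p$; this is Gauss's sign theorem, and it does not follow from the relation $G(\overline\eta,\overline\chi_1)^2=\overline\eta(-1)p$, which determines the value only up to sign. I would resolve it by Schur's eigenvalue argument: form the Fourier matrix $F=(\zeta_p^{jk})_{0\le j,k<p}$, note that $\operatorname{tr}F=\sum_{j}\zeta_p^{j^2}=G(\overline\eta,\overline\chi_1)$, and use $F^{2}=p\,P$ for an explicit permutation matrix $P$ to conclude that the eigenvalues of $F$ lie in $\{\pm\sqrt p,\pm\sqrt{-1}\,\sqrt p\}$. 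The multiplicities of these four eigenvalues are pinned down by computing $\operatorname{tr}F^{2}$, $\operatorname{tr}F^{3}$ and $\operatorname{tr}F^{4}$ directly, and reading off the forced value of $\operatorname{tr}F$ then gives the claimed sign. Dirichlet's Poisson-summation argument via theta functions is an equivalent alternative route.

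It remains to pass to $\F_q$. The quadratic and canonical characters of $\F_q$ are precisely the lifts of $\overline\eta$ and $\overline\chi_1$, namely $\eta=\overline\eta\circ N_{\F_q/\F_p}$ and $\chi_1=\overline\chi_1\circ\tr$, as one checks from the displayed formula for $\chi_1$ together with the surjectivity of the norm and the uniqueness of the order-two character. The Davenport--Hasse theorem then gives
$$
G(\eta,\chi_1)=(-1)^{m-1}\,G(\overline\eta,\overline\chi_1)^{m}.
$$
Substituting the prime-field value produces
$$
G(\eta,\chi_1)=(-1)^{m-1}\bigl(\sqrt{-1}^{(p-1)^2/4}\sqrt p\bigr)^{m}=(-1)^{m-1}\sqrt{-1}^{\frac{(p-1)^{2}m}{4}}\sqrt q,
$$
which is the first displayed identity, the second being the prime-field evaluation established above. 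The only verification needed here is that the lifts are indeed the quadratic and canonical characters of $\F_q$, after which both formulas follow at once.
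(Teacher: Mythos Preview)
Your argument is correct. The paper does not prove this lemma at all: it is quoted verbatim as Theorem~5.15 of Lidl--Niederreiter and used as a black box, so there is no ``paper's own proof'' to compare against. What you have supplied is precisely the standard two-step derivation --- Gauss's sign determination over $\F_p$ (via Schur's trace-of-Fourier-matrix argument or, equivalently, Dirichlet's theta-function proof) followed by the Davenport--Hasse lifting relation $G(\eta,\chi_1)=(-1)^{m-1}G(\overline\eta,\overline\chi_1)^m$ --- and each step is handled accurately, including the verification that $\eta=\overline\eta\circ N_{\F_q/\F_p}$ and $\chi_1=\overline\chi_1\circ\tr$. In short, you have written out a complete proof of a result the paper is content merely to cite; nothing is missing or in error.
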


\begin{Lem} [\cite{DD14}, Lemma 7]\label{Lem6}
Let the symbols be the same as before. Then
\begin{enumerate}
\item if $m\geq2$ is even, then $\eta(y)=1$ for each $y\in \mathbb{F}_{p}^{*}$;
\item if $m$ is odd, then $\eta(y)=\overline{\eta}(y)$ for each $y\in \mathbb{F}_{p}^{*}$.
\end{enumerate}
\end{Lem}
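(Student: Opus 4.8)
The plan is to reduce both parts to the elementary identity $\eta(x)=x^{(q-1)/2}$, valid for every $x\in\mathbb{F}_q^{*}$, where the right-hand side is a square root of $1$ in a field of odd characteristic and hence equals $\pm1$, identified with the corresponding complex number. First I would record this fact: if $g$ generates $\mathbb{F}_q^{*}$, then $g^{(q-1)/2}$ is the unique element of order $2$ in $\mathbb{F}_q^{*}$, namely $-1$, so $\eta(g^{k})=(-1)^{k}=\bigl(g^{(q-1)/2}\bigr)^{k}=(g^{k})^{(q-1)/2}$; running the same argument in $\mathbb{F}_p$ gives $\overline{\eta}(y)=y^{(p-1)/2}$ for $y\in\mathbb{F}_p^{*}$.

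The key quantity is $N:=\dfrac{q-1}{p-1}=1+p+p^{2}+\cdots+p^{m-1}$. Since $p$ is odd, each term $p^{i}$ is odd, so $N\equiv m\pmod 2$. If $m$ is even, then $N=2M$ is even, whence $(q-1)/2=M(p-1)$; for $y\in\mathbb{F}_p^{*}$ Fermat's little theorem gives $y^{p-1}=1$, so $\eta(y)=y^{(q-1)/2}=(y^{p-1})^{M}=1$, which is part (1). If $m$ is odd, then $N$ is odd; since $p$ is odd, $(p-1)/2$ is an integer and $(q-1)/2=N\cdot\frac{p-1}{2}$, so for $y\in\mathbb{F}_p^{*}$
$$\eta(y)=y^{(q-1)/2}=\bigl(y^{(p-1)/2}\bigr)^{N}=\overline{\eta}(y)^{N}=\overline{\eta}(y),$$
the last equality because $\overline{\eta}(y)\in\{1,-1\}$ and $N$ is odd. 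This is part (2).

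The only point that needs any care is the parity of $N$ — equivalently, whether $2(p-1)$ divides $p^{m}-1$ — since this is precisely where the hypothesis on $m$ enters; everything else is a single application of Fermat's little theorem. For part (1) there is also a clean alternative avoiding the parity count: when $m$ is even, $\mathbb{F}_{p^{2}}\subseteq\mathbb{F}_q$, and because $(p-1)\mid(p^{2}-1)/2=(p-1)(p+1)/2$ (as $p+1$ is even), every $y\in\mathbb{F}_p^{*}$ is already a square in $\mathbb{F}_{p^{2}}^{*}$, hence a square in $\mathbb{F}_q^{*}$, so $\eta(y)=1$.
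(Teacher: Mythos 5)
Your proof is correct. Note that the paper itself gives no argument for this lemma --- it is quoted from \cite{DD14} (Lemma 7) --- so there is nothing to compare against; your reduction to the identity $\eta(x)=x^{(q-1)/2}$ together with the parity observation $N=(q-1)/(p-1)\equiv m\pmod 2$ is exactly the standard proof of that cited result, and both the main argument and the subfield alternative for part (1) are sound.
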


\begin{Lem}[\cite{LN}, Theorem 5.33] \label{Lem7}
Let $\chi$ be a nontrivial additive character of $\mathbb{F}_{q}$, and let $f(x)=a_2x^{2}+a_1x+a_0\in \mathbb{F}_{q}[x]$ with $a_2\neq0$. Then
$$
\sum_{x\in \mathbb{F}_{q}}\chi(f(x))=\chi\left(a_0-a_1^{2}(4a_2)^{-1}\right)\eta(a_2)G(\eta,\chi).
$$
\end{Lem}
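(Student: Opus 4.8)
The plan is to reduce the general quadratic Weil sum to the basic Gauss sum $G(\eta,\chi)$ by completing the square, a manoeuvre available because $p$ is odd and hence $2$ and $4a_2$ are invertible in $\mathbb{F}_{q}$. Writing
$$
f(x)=a_2\left(x+a_1(2a_2)^{-1}\right)^{2}+\left(a_0-a_1^{2}(4a_2)^{-1}\right),
$$
and pulling the constant term out of the character, I would apply the substitution $y=x+a_1(2a_2)^{-1}$, which is a bijection of $\mathbb{F}_{q}$, to obtain
$$
\sum_{x\in \mathbb{F}_{q}}\chi(f(x))=\chi\left(a_0-a_1^{2}(4a_2)^{-1}\right)\sum_{y\in \mathbb{F}_{q}}\chi(a_2y^{2}).
$$
It then suffices to prove the identity $\sum_{y\in \mathbb{F}_{q}}\chi(a_2y^{2})=\eta(a_2)G(\eta,\chi)$.

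The key step is to evaluate this square sum by counting preimages under squaring. For $z\in \mathbb{F}_{q}^{*}$ the equation $y^{2}=z$ has exactly $1+\eta(z)$ solutions (two when $z$ is a nonzero square, none when it is a nonsquare), while $y^{2}=0$ has only the solution $y=0$. Grouping the sum over $y$ according to the value $z=y^{2}$ therefore gives
$$
\sum_{y\in \mathbb{F}_{q}}\chi(a_2y^{2})=1+\sum_{z\in \mathbb{F}_{q}^{*}}\left(1+\eta(z)\right)\chi(a_2z)=1+\sum_{z\in \mathbb{F}_{q}^{*}}\chi(a_2z)+\sum_{z\in \mathbb{F}_{q}^{*}}\eta(z)\chi(a_2z).
$$
Since $a_2\neq0$ and $\chi$ is nontrivial, the orthogonality relation for additive characters recalled above forces $\sum_{z\in \mathbb{F}_{q}}\chi(a_2z)=0$, whence $\sum_{z\in \mathbb{F}_{q}^{*}}\chi(a_2z)=-1$ and the first two terms cancel.

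For the remaining sum I would substitute $w=a_2z$ and use the complete multiplicativity of $\eta$ together with $\eta(a_2^{-1})=\eta(a_2)$ (valid because $\eta$ takes values in $\{\pm1\}$), which yields
$$
\sum_{z\in \mathbb{F}_{q}^{*}}\eta(z)\chi(a_2z)=\eta(a_2)\sum_{w\in \mathbb{F}_{q}^{*}}\eta(w)\chi(w)=\eta(a_2)G(\eta,\chi).
$$
Reassembling the pieces delivers the claimed formula. The delicate points are purely bookkeeping: checking the completion of the square and the invertibility of $2$ and $4a_2$ in odd characteristic, and confirming the preimage count $1+\eta(z)$. Neither is a genuine obstacle, so the real content is the passage from the square sum $\sum_{y}\chi(a_2y^{2})$ to the Gauss sum, which the preimage-counting identity accomplishes cleanly.
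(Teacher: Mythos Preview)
Your argument is correct and is in fact the standard textbook proof of this identity (completing the square followed by the preimage count $|\{y:y^{2}=z\}|=1+\eta(z)$ to reduce to the Gauss sum). There is nothing to compare against, however: the paper does not supply its own proof of this lemma but merely quotes it as Theorem~5.33 of \cite{LN}, so your write-up simply fills in what the paper leaves to the reference.
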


\begin{Lem}\label{Lem8}
Let the symbols be the same as before.  For $ y\in{F}_{p}^{*}, $  we have
$$
\sum_{y\in \mathbb{F}_{p}^{*}}\sum_{x\in \mathbb{F}_{q}}\zeta_{p}^{y\tr(x^{2}+x)}=\left\{\begin{array}{ll}
                                                                          (p-1)G, & \textrm{if\ } \ 2\mid m \ \textrm{and\ } p\mid m, \\
                                                                          -G, & \textrm{if\ } \ 2\mid m  \textrm{\ and\ } p\nmid m, \\
                                                                          0, & \textrm{if\ } \ 2\nmid m   \textrm{\ and\ } p\mid m, \\                                                                          \overline{\eta}(-m)G\overline{G}, & \textrm{if\ } \  2\mid m \textrm{\ and\ } p\nmid m.
                                                                        \end{array}
                                                                        \right.
$$
\end{Lem}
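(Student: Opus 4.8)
The plan is to reduce the inner sum over $\F_q$ to a Weil sum of a quadratic polynomial via Lemma \ref{Lem7}, and then to evaluate the resulting character sum over $\F_p^{*}$ using character orthogonality, the definition of the Gauss sums, and Lemma \ref{Lem6} to handle $\eta$ on the prime subfield.

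First, for a fixed $y\in\F_p^{*}$ we have $y\tr(x^{2}+x)=\tr(y(x^{2}+x))$ because $y$ lies in the prime field, so $\zeta_p^{\,y\tr(x^{2}+x)}=\chi_1(yx^{2}+yx)$, where $\chi_1$ is the canonical additive character of $\F_q$. The polynomial $f(x)=yx^{2}+yx$ has leading coefficient $y\neq0$, so Lemma \ref{Lem7} gives $\sum_{x\in\F_q}\chi_1(yx^{2}+yx)=\chi_1(-4^{-1}y)\,\eta(y)\,G$. Since $-4^{-1}y\in\F_p$ and $\tr$ restricted to the prime field is multiplication by $m$ (reduced mod $p$), we have $\chi_1(-4^{-1}y)=\overline{\chi}_1(-4^{-1}my)$, and hence
$$
\sum_{y\in\F_p^{*}}\sum_{x\in\F_q}\zeta_p^{\,y\tr(x^{2}+x)}=G\sum_{y\in\F_p^{*}}\eta(y)\,\overline{\chi}_1(-4^{-1}my).
$$

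Next I split on the parity of $m$ using Lemma \ref{Lem6}. If $m$ is even then $\eta(y)=1$ for all $y\in\F_p^{*}$, and the sum $\sum_{y\in\F_p^{*}}\overline{\chi}_1(-4^{-1}my)$ equals $p-1$ when $p\mid m$ (each summand is $1$) and equals $-1$ when $p\nmid m$ (re-index $z=-4^{-1}my$ over $\F_p^{*}$ and use $\sum_{z\in\F_p}\overline{\chi}_1(z)=0$); this gives the first two cases. If $m$ is odd then $\eta(y)=\overline{\eta}(y)$; when $p\mid m$ the exponent vanishes and the sum is $\sum_{y\in\F_p^{*}}\overline{\eta}(y)=0$, the third case.

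The only case needing a short computation is $m$ odd with $p\nmid m$. Putting $z=-4^{-1}my$, a bijection of $\F_p^{*}$, one finds $\overline{\eta}(y)=\overline{\eta}(-4m^{-1}z)=\overline{\eta}(-1)\overline{\eta}(4)\overline{\eta}(m^{-1})\overline{\eta}(z)=\overline{\eta}(-m)\,\overline{\eta}(z)$, using $\overline{\eta}(4)=1$ and $\overline{\eta}(m^{-1})=\overline{\eta}(m)$. Then the sum becomes $\overline{\eta}(-m)\sum_{z\in\F_p^{*}}\overline{\eta}(z)\overline{\chi}_1(z)=\overline{\eta}(-m)\,\overline{G}$, so the double sum equals $\overline{\eta}(-m)G\overline{G}$, which is the last case. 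I expect the only place requiring care to be the bookkeeping of the quadratic-character values under this substitution, together with the elementary but essential observation that $\tr$ acts on $\F_p$ as multiplication by $m\bmod p$; the rest is a direct application of Lemmas \ref{Lem6} and \ref{Lem7} and orthogonality of additive characters.
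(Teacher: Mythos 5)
Your proof is correct and follows essentially the same route as the paper: both apply Lemma \ref{Lem7} to collapse the inner sum to $G\sum_{y\in\F_p^{*}}\eta(y)\chi_1(-y/4)$, use the fact that $\tr$ acts on $\F_p$ as multiplication by $m$ (the paper phrases this as $\tr(1)=m$), and then split into the four cases via Lemma \ref{Lem6} and orthogonality, with the quadratic Gauss sum $\overline{G}$ appearing in the odd-$m$, $p\nmid m$ case after the same change of variable. Your bookkeeping of the quadratic-character values in that last case is accurate, so nothing is missing.
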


\begin{proof}
It follows from Lemma \ref{Lem7} that
\begin{align*}
\sum_{y\in \mathbb{F}_{p}^{*}}\sum_{x\in \mathbb{F}_{q}}\zeta_{p}^{y\tr(x^{2}+x)}&=\sum_{y\in \mathbb{F}_{p}^{*}}\sum_{x\in \mathbb{F}_{q}}\chi_1(yx^{2}+yx)\\
&=G\sum_{y\in \mathbb{F}_{p}^{*}}\chi_1\left(-\frac{y}{4}\right)\eta(y)\\
&=G\sum_{y\in \mathbb{F}_{p}^{*}}\eta(y)\zeta_{p}^{-\frac{y\tr(1)}{4}}.
\end{align*}
It is obviously that
$$
\tr(1)=m=\left\{\begin{array}{ll}
                  0,  & \textrm{if\ } p \mid m, \\
                  \neq0, & \textrm{otherwise}.
                \end{array}
                \right.
$$
 Consequently,
$$
\sum_{y\in \mathbb{F}_{p}^{*}}\sum_{x\in \mathbb{F}_{q}}\zeta_{p}^{y\tr(x^{2}+x)}=\left\{\begin{array}{ll}
                                               G\sum_{y\in \mathbb{F}_{p}^{*}}\eta(y),  & \textrm{if\ } 2\mid m\textrm{\ and\ } p\mid m,\\
                                               G\sum_{y\in \mathbb{F}_{p}^{*}}\zeta_{p}^{-\frac{ym}{4}},  & \textrm{if\ } 2\mid m\textrm{\ and\ } p\nmid m,\\
                                               G\sum_{y\in \mathbb{F}_{p}^{*}}\eta(y),  & \textrm{if\ } 2\nmid m\textrm{\ and\ } p\mid m,\\
                                               \eta(-m)G\sum_{y\in \mathbb{F}_{p}^{*}}\eta\left(-\frac{ym}{4}\right)\zeta_{p}^{-\frac{ym}{4}},  & \textrm{if\ } 2\nmid m\textrm{\ and\ } p\nmid m.
                                             \end{array}
                                             \right.
$$
Using Lemma \ref{Lem6}, we get this lemma.
\end{proof}

\begin{Lem}\label{Lem9}
Let the symbols be the same as before.  For $ b\in{F}_{q}^{*} $,  let
$$ B=\sum_{y\in \mathbb{F}_{p}^{*}}\sum_{z\in \mathbb{F}_{p}^{*}}\sum_{x\in \mathbb{F}_{q}}\chi_1(yx^{2}+yx+bzx). $$
Then
\begin{enumerate}

\item if $\tr(b^{2})\neq0$ and $\tr(b)=0$, we have $$B
=\left\{\begin{array}{ll}
                                                                           -(p-1)G, & \textrm{if\ } \ 2\mid m \textrm{\ and\ } p\mid m, \\
                                                                           \overline{\eta}\left(m\tr(b^{2})\right)G\overline{G}^{2}+G, & \textrm{if\ } \ 2\mid m \textrm{\ and\ } p\nmid m,  \\
                                                                           \overline{\eta}\left(-\tr(b^{2})\right)(p-1)G\overline{G}, & \textrm{if\ } \  2\nmid m \textrm{\ and\ } p\mid m, \\
                                                                           -\overline{\eta}\left(-\tr(b^{2})\right)+\overline{\eta}(-m))G\overline{G}, & \textrm{if\ } \ 2\nmid m \textrm{\ and\ }  p\nmid m;
                                                                         \end{array}
                                                                         \right.$$
\item if $\tr(b^{2})\neq0$ and $\tr(b)\neq0$, we have
 $$B
=\left\{\begin{array}{ll}
                                                                           \overline{\eta}(-1)G\overline{G}^{2}-(p-1)G, & \textrm{if\ }  2\mid m \textrm{\ and\ } p\mid m, \\
                                                                           G,  & \textrm{if\ }  2\mid m,\  p\nmid m \textrm{\ and\ } (\tr(b))^{2}=m\tr(b^{2}), \\
                                                                           \overline{\eta}(m\tr(b^{2})-(\tr(b))^{2})G\overline{G}^{2}+G, & \textrm{if\ }  2\mid m, \ p\nmid m\textrm{\ and\ } (\tr(b))^{2}\neq m\tr(b^{2}), \\
                                                                           -\overline{\eta}(-\tr(b^{2}))G\overline{G}, & \textrm{if\ }  2\nmid m\textrm{\ and\ }  p\mid m, \\
                                                                           (\overline{\eta}(-\tr(b^{2}))(p-1)-\overline{\eta}(-m))G\overline{G}, & \textrm{if\ }  2\nmid m, \  p\nmid m \textrm{\ and\ } (\tr(b))^{2}=m\tr(b^{2}), \\
                                                                           -(\overline{\eta}(-\tr(b^{2}))+\overline{\eta}(-m))G\overline{G}, & \textrm{if\ }  2\nmid m, \ p\nmid m \textrm{\ and\ } (\tr(b))^{2}\neq m\tr(b^{2});
                                                                         \end{array}
                                                                         \right.$$
\item if $\tr(b^{2})=0$ and $\tr(b)\neq0$, we have
$$
B=\left\{\begin{array}{ll}
                                                                           -(p-1)G, & \textrm{if\ } \ 2\mid m\textrm{\ and} \ p\mid m, \\
                                                                           G,  & \textrm{if\ } \ 2\mid m\textrm{\ and} \ p\nmid m, \\
                                                                           0, & \textrm{if\ } \ 2\nmid m\textrm{\ and} \ p\mid m, \\
                                                                           -\overline{\eta}(-m)G\overline{G}, & \textrm{if\ } \ 2\nmid m\textrm{\ and} \ p\nmid m;
                                                                         \end{array}
                                                                         \right.
                                                                         $$
\item if $\tr(b^{2})=0$ and $\tr(b)=0$, we have
$$B
=\left\{\begin{array}{ll}
                                                                           (p-1)^{2}G, & \textrm{if\ } \ 2\mid m\textrm{\ and} \ p\mid m, \\
                                                                           -(p-1)G,  & \textrm{if\ } \ 2\mid m\textrm{\ and} \ p\nmid m, \\
                                                                           0, & \textrm{if\ } \ 2\nmid m\textrm{\ and} \ p\mid m, \\
                                                                           \overline{\eta}(-m)(p-1)G\overline{G}, & \textrm{if\ } \ 2\nmid m\textrm{\ and} \ p\nmid m.
                                                                         \end{array}
                                                                         \right.$$
\end{enumerate}
\end{Lem}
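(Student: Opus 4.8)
The plan is to collapse the triple sum to a one–dimensional (twisted) Gauss sum over $\mathbb{F}_p$ by evaluating first the sum on $x$ and then the sum on $z$, and finally to read off each line of the table from a short list of elementary character sums over $\mathbb{F}_p^{*}$. \textbf{Step 1 (eliminate $x$).} For fixed $y,z\in\mathbb{F}_p^{*}$ put $f(x)=yx^{2}+(y+bz)x$; since $y\neq 0$, Lemma~\ref{Lem7} over $\mathbb{F}_q$ gives $\sum_{x\in\mathbb{F}_q}\chi_1(f(x))=\chi_1\!\bigl(-(y+bz)^{2}(4y)^{-1}\bigr)\eta(y)G$, so $B=G\sum_{y\in\mathbb{F}_p^{*}}\sum_{z\in\mathbb{F}_p^{*}}\eta(y)\chi_1\!\bigl(-(y+bz)^{2}(4y)^{-1}\bigr)$. \textbf{Step 2 (linearity of the trace).} Since $-\frac{(y+bz)^{2}}{4y}=-\frac{y}{4}-\frac{bz}{2}-\frac{b^{2}z^{2}}{4y}$ with $y,z\in\mathbb{F}_p$, and $\tr$ is $\mathbb{F}_p$-linear with $\tr(1)=m$, we obtain
$$B=G\sum_{y\in\mathbb{F}_p^{*}}\eta(y)\,\zeta_p^{-\frac{ym}{4}}\sum_{z\in\mathbb{F}_p^{*}}\zeta_p^{-\frac{z\tr(b)}{2}-\frac{z^{2}\tr(b^{2})}{4y}},$$
the exponents being read modulo $p$; in particular $\zeta_p^{-ym/4}=1$ exactly when $p\mid m$.

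\textbf{Step 3 (eliminate $z$).} If $\tr(b^{2})\neq 0$ the inner exponent is a nondegenerate quadratic in $z$: extending the $z$-sum to all of $\mathbb{F}_p$ (the $z=0$ term being $1$), applying Lemma~\ref{Lem7} over $\mathbb{F}_p$, and simplifying with $\overline{\eta}(4)=1$ and $\overline{\eta}(y^{-1})=\overline{\eta}(y)$, the $z$-sum equals $\overline{\eta}(-\tr(b^{2}))\,\overline{\eta}(y)\,\overline{G}\,\zeta_p^{y(\tr b)^{2}/(4\tr(b^{2}))}-1$. Writing $t=\bigl((\tr b)^{2}-m\tr(b^{2})\bigr)/\bigl(4\tr(b^{2})\bigr)$ and substituting back,
$$B=\overline{\eta}(-\tr(b^{2}))\,G\overline{G}\sum_{y\in\mathbb{F}_p^{*}}\eta(y)\overline{\eta}(y)\,\zeta_p^{ty}-G\sum_{y\in\mathbb{F}_p^{*}}\eta(y)\,\zeta_p^{-ym/4}.$$
If $\tr(b^{2})=0$, the $z$-sum is just $\sum_{z\in\mathbb{F}_p^{*}}\zeta_p^{-z\tr(b)/2}$, equal to $p-1$ when $\tr(b)=0$ and to $-1$ otherwise, and $B$ becomes a scalar multiple of $\sum_{y\in\mathbb{F}_p^{*}}\eta(y)\zeta_p^{-ym/4}$.

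\textbf{Step 4 (evaluate the $y$-sum, match the table).} By Lemma~\ref{Lem6}, $\eta(y)=1$ when $m$ is even and $\eta(y)=\overline{\eta}(y)$ when $m$ is odd, so $\eta(y)\overline{\eta}(y)$ equals $\overline{\eta}(y)$ or $1$ accordingly. The remaining sums over $\mathbb{F}_p^{*}$ are of three elementary types: $\sum_y\zeta_p^{ty}$ ($=p-1$ if $t=0$, else $-1$); $\sum_y\overline{\eta}(y)\zeta_p^{ty}$ ($=0$ if $t=0$, else $\overline{\eta}(t)\overline{G}$ via $y\mapsto t^{-1}y$); and the same two types with exponent $-ym/4$ in place of $ty$ (nonzero exactly when $p\nmid m$). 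Running through $p\mid m$ versus $p\nmid m$, together with the dichotomy $t=0\iff(\tr b)^{2}=m\tr(b^{2})$ in the branch $\tr(b^{2})\neq0,\ \tr(b)\neq0$, and repeatedly using that $\overline{\eta}$ is multiplicative with $\overline{\eta}(-1)^{2}=\overline{\eta}(4)=\overline{\eta}(u^{2})=1$, each entry of the statement drops out.

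\textbf{Main obstacle.} There is no conceptual hurdle past Lemma~\ref{Lem7}; the difficulty is purely organizational. One has to keep four parameter regimes straight ($m$ even/odd times $p\mid m$ or not), layer on the case splits $\tr(b^{2})\overset{?}{=}0$, $\tr(b)\overset{?}{=}0$ and, in the hardest branch, $t\overset{?}{=}0$, and at each leaf massage a product of Legendre symbols such as $\overline{\eta}(-\tr(b^{2}))\,\overline{\eta}(t)$ into exactly the normalized form in the table, e.g.\ $\overline{\eta}(-\tr(b^{2}))\,\overline{\eta}\bigl((\tr b)^{2}-m\tr(b^{2})\bigr)=\overline{\eta}\bigl(m\tr(b^{2})-(\tr b)^{2}\bigr)$. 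It is also essential to remember that the inner sums run over $\mathbb{F}_p^{*}$, not $\mathbb{F}_p$; the $-1$-versus-$(p-1)$ corrections this causes are precisely what produce the additive $\pm G$ and $-(p-1)G$ terms in the answer.
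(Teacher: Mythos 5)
Your proposal is correct and follows essentially the same route as the paper: apply Lemma \ref{Lem7} to the $x$-sum, complete the $z$-sum over $\mathbb{F}_p$ and apply Lemma \ref{Lem7} again, then evaluate the remaining $y$-sum case by case via Lemma \ref{Lem6} and $\sum_{y\in\mathbb{F}_p^{*}}\zeta_p^{y}=-1$. The only difference is cosmetic: you treat all four parts uniformly by introducing the parameter $t=\bigl((\tr b)^{2}-m\tr(b^{2})\bigr)/\bigl(4\tr(b^{2})\bigr)$, whereas the paper writes out only part 1 in detail and declares the rest similar.
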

\begin{proof}
We only give the proof of the first part since the remaining parts are similar.

By Lemma \ref{Lem7}, we have
\begin{align*}
B&=G\sum_{y\in \mathbb{F}_{p}^{*}}\sum_{z\in \mathbb{F}_{p}^{*}}\eta(y)\chi_1\left(-\frac{(y+bz)^{2}}{4y}\right)\\
&=G\sum_{y\in \mathbb{F}_{p}^{*}}\eta(y)\chi_1\left(-\frac{y}{4}\right)\sum_{z\in \mathbb{F}_{p}^{*}}\chi_1\left(-\frac{b^{2}z^{2}}{4y}-\frac{bz}{2}\right)\\
&=G\sum_{y\in \mathbb{F}_{p}^{*}}\eta(y)\chi_1\left(-\frac{y}{4}\right)\sum_{z\in \mathbb{F}_{p}^{*}}\zeta_{p}^{-\frac{\tr(b^{2})z^{2}}{4y}-\frac{\tr(b)z}{2}}.
\end{align*}
Note that in the first part, $\tr(b^{2})\neq0$ and $\tr(b)=0.$ Therefore,
\begin{align*}
B&=G\sum_{y\in \mathbb{F}_{p}^{*}}\eta(y)\chi_1\left(-\frac{y}{4}\right)\sum_{z\in \mathbb{F}_{p}^{*}}\overline{\chi}_1\left(-\frac{\tr(b^{2})z^{2}}{4y}\right)&\\
&=G\sum_{y\in \mathbb{F}_{p}^{*}}\eta(y)\chi_1\left(-\frac{y}{4}\right)\sum_{z\in \mathbb{F}_{p}}\overline{\chi}_1\left(-\frac{\tr(b^{2})z^{2}}{4y}\right)-G\sum_{y\in \mathbb{F}_{p}^{*}}\eta(y)\chi_1\left(-\frac{y}{4}\right)&\\
&=G\sum_{y\in \mathbb{F}_{p}^{*}}\eta(y)\chi_1\left(-\frac{y}{4}\right)\overline{\chi}_1(0)\overline{\eta}(-\tr(b^{2})y)\overline{G}-G\sum_{y\in \mathbb{F}_{p}^{*}}\eta(y)\chi_1\left(-\frac{y}{4}\right)&\\
&=\overline{\eta}(-\tr(b^{2}))G\overline{G}\sum_{y\in \mathbb{F}_{p}^{*}}\eta(y)\chi_1\left(-\frac{y}{4}\right)\overline{\eta}(y)-G\sum_{y\in \mathbb{F}_{p}^{*}}\eta(y)\chi_1\left(-\frac{y}{4}\right)&\\
&=\left\{\begin{array}{ll}
 \overline{\eta}(-\tr(b^{2}))G\overline{G}\sum_{y\in \mathbb{F}_{p}^{*}}\overline{\eta}(y)-G\sum_{y\in \mathbb{F}_{p}^{*}}1, & \textrm{if\ } \ 2\mid m\textrm{\ and} \ p\mid m, \\
                                                                           \overline{\eta}(m\tr(b^{2}))G\overline{G}\sum_{y\in \mathbb{F}_{p}^{*}}\overline{\chi_1}\left(-\frac{my}{4}\right)\overline{\eta}\left(-\frac{my}{4}\right)-G\sum_{y\in \mathbb{F}_{p}^{*}}\zeta_{p}^{-\frac{my}{4}}, & \textrm{if\ } \ 2\mid m\textrm{\ and} \ p\nmid m,  \\
                                                                           \overline{\eta}(-\tr(b^{2}))G\overline{G}\sum_{y\in \mathbb{F}_{p}^{*}}1-G\sum_{y\in \mathbb{F}_{p}^{*}}\overline{\eta}(y), & \textrm{if\ } \ 2\nmid m\textrm{\ and} \ p\mid m, \\
                                                                           \overline{\eta}(-\tr(b^{2}))G\overline{G}\sum_{y\in \mathbb{F}_{p}^{*}}\zeta_{p}^{-\frac{my}{4}}-G\overline{\eta}(-m)\sum_{y\in \mathbb{F}_{p}^{*}}\overline{\eta}\left(-\frac{my}{4}\right)\zeta_{p}^{-\frac{my}{4}}, & \textrm{if\ } \ 2\nmid m\textrm{\ and} \ p\nmid m.
                                                                         \end{array}
                                                                         \right.
\end{align*}
 Combining Lemma \ref{Lem6} and the equation $ \sum_{y\in \mathbb{F}_{p}^{*}}\zeta_{p}^{y}=-1, $ we get the result of the first part.                                                                          \end{proof}

\begin{Lem}\label{Lem10}
For $ a\in \mathbb{F}_{p} $, let
$$ N(0,a)=\{x\in {F}_{q}: \tr(x^{2})=0, \tr(x)=a\}. $$
Then
\begin{enumerate}
\item if $a\neq0,$ we have
$$|N(0,a)|=\left\{\begin{array}{ll}
                                                                          p^{m-2}, & \textrm{if\ } \ p\mid m, \\
                                                                          p^{m-2}+p^{-1}G, & \textrm{if\ } \ 2\mid m\textrm{\ and} \ p\nmid m, \\
                                                                          p^{m-2}-p^{-2}\overline{\eta}(-m)G\overline{G}, & \textrm{if\ } \ 2\nmid m\textrm{\ and}\  p\nmid m;
                                                                        \end{array}
                                                                        \right.$$

\item if $a=0,$ we have
$$|N(0,0)|=\left\{\begin{array}{ll}
                                                                          p^{m-2}+p^{-1}(p-1)G, & \textrm{if\ } \ 2\mid m\textrm{\ and} \ p\mid m, \\
                                                                          p^{m-2}, & \textrm{if\ } \ 2\mid m\textrm{\ and} \ p\nmid m, \\
                                                                          p^{m-2}, & \textrm{if\ } \ 2\nmid m\textrm{\ and} \ p\mid m, \\                                                                          p^{m-2}+p^{-2}\overline{\eta}(-m)(p-1)G\overline{G}, & \textrm{if\ } \ 2\nmid m\textrm{\ and} \ p\nmid m.
                                                                        \end{array}
                                                                        \right.$$
\end{enumerate}
\end{Lem}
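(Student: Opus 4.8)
The plan is to detect the two defining conditions $\tr(x^{2})=0$ and $\tr(x)=a$ simultaneously by orthogonality of additive characters. Writing $\frac{1}{p}\sum_{y\in\F_p}\zeta_p^{y\tr(x^{2})}$ for the indicator of $\tr(x^{2})=0$ and $\frac{1}{p}\sum_{z\in\F_p}\zeta_p^{z(\tr(x)-a)}$ for the indicator of $\tr(x)=a$, and noting that for $y,z\in\F_p$ one has $\zeta_p^{y\tr(x^{2})+z\tr(x)}=\chi_1(yx^{2}+zx)$, we obtain
$$p^{2}|N(0,a)|=\sum_{y\in\F_p}\sum_{z\in\F_p}\zeta_p^{-za}\sum_{x\in\F_q}\chi_1(yx^{2}+zx).$$
The term $y=0$ equals $\sum_{z\in\F_p}\zeta_p^{-za}\sum_{x\in\F_q}\chi_1(zx)=q$, since only $z=0$ contributes; after dividing by $p^{2}$ this is the main term $p^{m-2}$.

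For $y\neq0$, I would evaluate the inner sum over $\F_q$ by completing the square and applying Lemma~\ref{Lem7}, getting $\sum_{x\in\F_q}\chi_1(yx^{2}+zx)=\chi_1(-z^{2}(4y)^{-1})\eta(y)G$. The crucial observation is that $-z^{2}(4y)^{-1}\in\F_p$, and for $c\in\F_p$ one has $\tr(c)=mc$, so $\chi_1(c)=\zeta_p^{mc}$; this is what reintroduces the dependence on $m\bmod p$. Hence the $y\neq0$ part becomes $G\sum_{y\in\F_p^{*}}\eta(y)\sum_{z\in\F_p}\zeta_p^{-za-mz^{2}(4y)^{-1}}$. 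If $p\mid m$, the $z^{2}$-term vanishes and $\sum_{z\in\F_p}\zeta_p^{-za}$ equals $p$ for $a=0$ and $0$ otherwise. If $p\nmid m$, the inner sum is a quadratic Gauss sum over $\F_p$, which Lemma~\ref{Lem7} (applied over $\F_p$, with $\overline{\chi}_1,\overline{\eta},\overline{G}$) evaluates — after simplifying the completed square and using $\overline{\eta}(4)=1$ and $\overline{\eta}(y^{-1})=\overline{\eta}(y)$ — to $\zeta_p^{a^{2}m^{-1}y}\,\overline{\eta}(-m)\,\overline{\eta}(y)\,\overline{G}$.

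It then remains to compute the outer sum over $y\in\F_p^{*}$. For this I would use Lemma~\ref{Lem6} to replace $\eta(y)$ by $1$ when $m$ is even and by $\overline{\eta}(y)$ when $m$ is odd, together with the identities $\sum_{y\in\F_p^{*}}\zeta_p^{y}=-1$, $\sum_{y\in\F_p^{*}}\overline{\eta}(y)=0$, $\sum_{y\in\F_p^{*}}\overline{\eta}(y)\zeta_p^{cy}=\overline{\eta}(c)\overline{G}$ for $c\in\F_p^{*}$, and the Gauss-sum identity $\overline{G}^{2}=\overline{\eta}(-1)p$. For example, in the case $m$ even, $p\nmid m$, $a\neq0$, the $y$-sum produces $\overline{\eta}(m)\overline{G}$, so the $y\neq0$ part equals $G\overline{G}^{2}\overline{\eta}(-1)=pG$ and $|N(0,a)|=p^{m-2}+p^{-1}G$; all other entries follow in the same way by splitting on $a=0$ versus $a\neq0$, on the parity of $m$, and on whether $p\mid m$.

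The only real difficulty is bookkeeping: there are eight cases, and in each the iterated Gauss sum must be pushed through correctly. In particular the collapse $G\overline{G}^{2}\overline{\eta}(-1)=pG$ coming from $\overline{G}^{2}=\overline{\eta}(-1)p$ is exactly what gives the ``$+p^{-1}G$'' term its correct normalization in the even, $p\nmid m$, $a\neq0$ row, and an analogous careful handling of the factors $\overline{\eta}(-m)$ is what produces the $\overline{\eta}(-m)G\overline{G}$ terms in the odd, $p\nmid m$ rows.
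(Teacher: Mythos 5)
Your proposal is correct and follows essentially the same route as the paper: detect both conditions with orthogonality of additive characters, evaluate the inner sum over $\F_q$ with Lemma~\ref{Lem7}, use $\tr(c)=mc$ for $c\in\F_p$ to split on $p\mid m$ versus $p\nmid m$, and finish with Lemma~\ref{Lem6} and the quadratic Gauss sum over $\F_p$. The case computations you spell out (e.g.\ the collapse $G\overline{G}^{2}\overline{\eta}(-1)=pG$ in the even, $p\nmid m$, $a\neq0$ row) all check against the stated table, and you actually cover the $a=0$ cases that the paper leaves to the reader.
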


\begin{proof}
We only prove the first statement of this lemma, since the other statements can be similarly proved.

 For $a\in \mathbb{F}_{p}^{*},$ we have
\begin{align*}
|N(0,a)|  &= p^{-2}\sum_{x\in \mathbb{F}_{q}}\left(\sum_{y \in F_{p}}\zeta_{p}^{y\tr(x^{2})}\right)
\left(\sum_{z \in F_{p}}\zeta_{p}^{z(\tr(x)-a)}\right)   \\
  &=p^{-2}\sum_{x\in \mathbb{F}_{q}}\left(1+\sum_{y \in F_{p}^{\ast}}\zeta_{p}^{y\tr(x^{2})}\right)
\left(1+\sum_{z \in F_{p}^{\ast}}\zeta_{p}^{z(\tr(x)-a)}\right) \\
 &= p^{m-2} + p^{-2}\sum_{y \in F_{p}^{\ast}}\sum_{x\in \mathbb{F}_{q}}\zeta_{p}^{y\tr(x^{2})}
 + p^{-2}\sum_{z \in F_{p}^{\ast}}\sum_{x\in \mathbb{F}_{q}}\zeta_{p}^{z(\tr(x)-a)}  \\
& \quad +
p^{-2}\sum_{y \in F_{p}^{\ast}}\sum_{z \in F_{p}^{\ast}}\sum_{x\in \mathbb{F}_{q}}\zeta_{p}^{\tr(yx^{2}+zx)-za} \\
& = p^{m-2} + p^{-2}\sum_{y \in F_{p}^{\ast}}\sum_{x\in \mathbb{F}_{q}}\chi_1(yx^{2}) +
p^{-2}\sum_{y \in F_{p}^{\ast}}\sum_{z \in F_{p}^{\ast}}\zeta_{p}^{-za}\sum_{x\in \mathbb{F}_{q}}\chi_1(yx^{2}+zx).
\end{align*}

By Lemma \ref{Lem7}, we obtain
\begin{align*}
 |N(0,a)|&= p^{m-2} + p^{-2}\sum_{y \in F_{p}^{\ast}}\sum_{x\in \mathbb{F}_{q}}\chi_1(yx^{2}) +
p^{-2}\sum_{y \in F_{p}^{\ast}}\sum_{z \in F_{p}^{\ast}}\zeta_{p}^{-za}\sum_{x\in \mathbb{F}_{q}}\chi_1(yx^{2}+zx)& \\
 &= p^{m-2} + p^{-2}\sum_{y \in F_{p}^{\ast}}\chi_1(0)\eta(y)G +
p^{-2}\sum_{y \in F_{p}^{\ast}}\sum_{z \in F_{p}^{\ast}}\zeta_{p}^{-za}\chi_1(-\frac{z^{2}}{4y})\eta(y)G &\\
&=\left\{\begin{array}{ll}
                                                                          p^{m-2} + p^{-2}G\sum_{y \in F_{p}^{\ast}}\eta(y) +
p^{-2}G\sum_{y \in F_{p}^{\ast}}\sum_{z \in F_{p}^{\ast}}\zeta_{p}^{-za}\eta(y), & \textrm{if\ } \ p\mid m \\
                                                                          p^{m-2} + p^{-2}G\sum_{y \in F_{p}^{\ast}}\eta(y) +
p^{-2}G\sum_{y \in F_{p}^{\ast}}\sum_{z \in F_{p}^{\ast}}\zeta_{p}^{-za}\eta(y)\zeta_{p}^{-\frac{mz^{2}}{4y}}, & \textrm{if\ } \ p\nmid m
\end{array}
\right.\\
&=\left\{\begin{array}{ll}
                                                                          p^{m-2}, & \textrm{if\ } \ p\mid m \\
                                                                          p^{m-2} + p^{-2}(p-1)G +
p^{-2}G\sum_{y \in F_{p}^{\ast}}\sum_{z \in F_{p}^{\ast}}\zeta_{p}^{-za}\zeta_{p}^{-\frac{mz^{2}}{4y}}, & \textrm{if\ } \ 2\mid m \textrm{\ and\ } p\nmid m \\
p^{m-2} +
p^{-2}\eta(-m)G\sum_{y \in F_{p}^{\ast}}\sum_{z \in F_{p}^{\ast}}\zeta_{p}^{-za}\eta\left(-\frac{mz^{2}}{4y}\right)\zeta_{p}^{-\frac{mz^{2}}{4y}}, & \textrm{if\ } \ 2\nmid m \textrm{\ and\ } p\nmid m
                                                                        \end{array}
                                                                        \right.\\
&=\left\{\begin{array}{ll}
                                                                          p^{m-2}, & \textrm{if\ } \ p\mid m, \\
                                                                          p^{m-2} + p^{-1}G, & \textrm{if\ } \ 2\mid m \textrm{\  and\ } p\nmid m,\\
p^{m-2}
-p^{-2}\overline{\eta}(-m)G\overline{G}, & \textrm{if\ } \ 2\nmid m \textrm{\  and\  } p\nmid m. \\
\end{array}                                                                       \right.
\end{align*}
\end{proof}

\begin{Lem} \label{Lem11}
 Let
$$ N(0,\overline{0})=\{x\in {F}_{q}: \tr(x^{2})=0\textrm{\ and\ } \tr(x)\neq0\},$$
$$N(\overline{0},\overline{0})=\{x\in {F}_{q}: \tr(x^{2})\neq0\textrm{\ and\ } \tr(x)\neq0\},$$
$$ N(\overline{0},0)=\{x\in {F}_{q}: \tr(x^{2})\neq0\textrm{\ and\ } \tr(x)=0\}. $$
Then we get
\begin{enumerate}
\item
$$
|N(0,\overline{0})|=\left\{\begin{array}{ll}
                                                                          (p-1)p^{m-2}, & \textrm{if\ } p\mid m, \\
                                                                          (p-1)\left(p^{m-2}+p^{-1}G\right), & \textrm{if\ }  2\mid m\textrm{\ and\ }  p\nmid m, \\
                                                                          (p-1)\left(p^{m-2}-p^{-2}\overline{\eta}(-m)G\overline{G}\right), & \textrm{if\ }  2\nmid m\textrm{\ and\ } p\nmid m;
                                                                        \end{array}
                                                                        \right.
$$
\item
$$
|N(\overline{0},\overline{0})|=\left\{\begin{array}{ll}
                                                                          (p-1)^{2}p^{m-2}, & \textrm{if\ }  p\mid m, \\
                                                                          (p-1)^{2}p^{m-2}-(p-1)p^{-1}G, & \textrm{if\ }  2\mid m\textrm{\ and\ }  p\nmid m, \\
                                                                          (p-1)^{2}p^{m-2}+(p-1)p^{-2}\overline{\eta}(-m)G\overline{G}, & \textrm{if\ }  2\nmid m\textrm{\ and\ }  p\nmid m;
                                                                        \end{array}
                                                                        \right.
$$
\item
$$
|N(\overline{0},0)|=\left\{\begin{array}{ll}
                                                                          (p-1)p^{m-2}-p^{-1}(p-1)G, & \textrm{if\ } 2\mid m\textrm{\ and\ }  p\mid m, \\
                                                                          (p-1)p^{m-2}, & \textrm{if\ }  2\mid m\textrm{\ and\ } p\nmid m, \\
                                                                          (p-1)p^{m-2}, & \textrm{if\ } 2\nmid m\textrm{\ and\ } p\mid m, \\                                                                          (p-1)p^{m-2}-p^{-2}\overline{\eta}(-m)(p-1)G\overline{G}, & \textrm{if\ }  2\nmid m\textrm{\ and\ }  p\nmid m.
                                                                        \end{array}
                                                                        \right.
$$
\end{enumerate}
\end{Lem}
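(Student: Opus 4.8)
The plan is to reduce the whole statement to Lemma~\ref{Lem10} together with two elementary counting identities for the trace map, so that no new character-sum evaluation is required.

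First I would handle $|N(0,\overline{0})|$. The set $N(0,\overline{0})$ is the disjoint union $\bigcup_{a\in\mathbb{F}_{p}^{*}}N(0,a)$, hence
$$|N(0,\overline{0})|=\sum_{a\in\mathbb{F}_{p}^{*}}|N(0,a)|.$$
By part~(1) of Lemma~\ref{Lem10} the quantity $|N(0,a)|$ takes the same value for every $a\in\mathbb{F}_{p}^{*}$, so $|N(0,\overline{0})|=(p-1)\,|N(0,a)|$ for any fixed nonzero $a$; substituting the three listed values of $|N(0,a)|$ gives part~(1) of the present lemma immediately.

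Next I would use that $\tr\colon\mathbb{F}_{q}\to\mathbb{F}_{p}$ is a surjective $\mathbb{F}_{p}$-linear map, so each of its fibres has exactly $p^{m-1}$ elements; in particular $|\{x\in\mathbb{F}_{q}:\tr(x)=0\}|=p^{m-1}$ and $|\{x\in\mathbb{F}_{q}:\tr(x)\neq0\}|=(p-1)p^{m-1}$. Splitting the fibre $\tr(x)=0$ according to whether $\tr(x^{2})$ vanishes yields
$$|N(\overline{0},0)|=p^{m-1}-|N(0,0)|,$$
and plugging in the four values of $|N(0,0)|$ from Lemma~\ref{Lem10}(2) proves part~(3). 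Splitting the set $\{x:\tr(x)\neq0\}$ in the same way yields
$$|N(\overline{0},\overline{0})|=(p-1)p^{m-1}-|N(0,\overline{0})|,$$
and substituting the expressions for $|N(0,\overline{0})|$ found in the previous step proves part~(2).

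The only delicate point, and the place where an error is most likely to slip in, is the bookkeeping of cases: Lemma~\ref{Lem10}(1) is organized as ``$p\mid m$'', ``$2\mid m$ and $p\nmid m$'', ``$2\nmid m$ and $p\nmid m$'', whereas Lemma~\ref{Lem10}(2) and the statement to be proved use the full four-way split according to the parity of $m$ and whether $p\mid m$. I would therefore tabulate the four cases explicitly and check, for example, that when $p\mid m$ all Gauss-sum contributions cancel and one is left with the clean values $(p-1)p^{m-2}$, $(p-1)^{2}p^{m-2}$, and $(p-1)p^{m-2}$. Once this table is in place every formula in the lemma is a single subtraction, so there is no genuine obstacle beyond the routine verification.
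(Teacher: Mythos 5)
Your proposal is correct and coincides with the paper's own argument: the paper also uses exactly the identities $|N(0,\overline{0})|=\sum_{a\in\mathbb{F}_p^{*}}|N(0,a)|$, $|N(\overline{0},0)|+|N(0,0)|=p^{m-1}$, and $|N(\overline{0},\overline{0})|+|N(0,\overline{0})|=p^{m}-p^{m-1}$, and then reads off the values from Lemma~\ref{Lem10}. The case bookkeeping you flag does work out, so nothing further is needed.
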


\begin{proof}
By the definitions,  we have
$$|N(\overline{0},0)|+|N(0,0)|=p^{m-1},$$
$$|N(0,\overline{0})|=\sum_{a \in F_{p}^{\ast}}|N(0,a)|,$$
$$ |N(\overline{0},\overline{0})|+|N(0,\overline{0})|=p^{m}-p^{m-1}.$$
Then the desired results follow from Lemma \ref{Lem10}.
\end{proof}

\begin{Lem}\label{Lem12}
\ Suppose  $ p\nmid m $ and let  $$ V=\{x\in {F}_{q}: \tr(x)\neq0\textrm{\ and\ } (\tr(x))^{2}= m\tr(x^{2})\}. $$
Then
$$
|V|=\left\{\begin{array}{ll}
                                                                          (p-1)p^{m-2}, & \textrm{if\ } \ 2\mid m, \\
                                                                          (p-1)p^{m-2}+p^{-2}\overline{\eta}(-m)(p-1)^{2}G\overline{G}, & \textrm{if\ } \ 2\nmid m.
                                                                        \end{array}
                                                                        \right.
$$
\end{Lem}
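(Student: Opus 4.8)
The plan is to exploit the fact that the two defining conditions of $V$ are homogeneous of degree $2$ in $x$, which reduces the problem to a single point count of exactly the type handled in Lemma~\ref{Lem10}. Indeed, for $c\in\mathbb{F}_p^{*}$ we have $\tr(cx)=c\tr(x)$ and $\tr((cx)^{2})=c^{2}\tr(x^{2})$, so $\tr(x)\neq0$ and $(\tr(x))^{2}=m\tr(x^{2})$ are both invariant under the free action $x\mapsto cx$ of $\mathbb{F}_p^{*}$ on $\mathbb{F}_q^{*}$. Hence $V$ is a union of orbits of size $p-1$, each containing exactly one element of trace $1$, so
$$|V|=(p-1)\,N,\qquad N:=\big|\{x\in\mathbb{F}_q:\tr(x)=1,\ \tr(x^{2})=m^{-1}\}\big|,$$
where on $\{\tr(x)=1\}$ the condition $(\tr(x))^{2}=m\tr(x^{2})$ becomes $\tr(x^{2})=m^{-1}$, which makes sense since $p\nmid m$.

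Next I would compute $N$ by orthogonality of the additive characters of $\mathbb{F}_p$:
$$N=\frac{1}{p^{2}}\sum_{x\in\mathbb{F}_q}\Big(\sum_{y\in\mathbb{F}_p}\zeta_p^{y(\tr(x^{2})-m^{-1})}\Big)\Big(\sum_{z\in\mathbb{F}_p}\zeta_p^{z(\tr(x)-1)}\Big).$$
Expanding and separating according to whether $y$ and $z$ vanish, the term $y=z=0$ gives $p^{m-2}$, the term $y=0,\,z\neq0$ vanishes, and the remaining two terms reduce, via Lemma~\ref{Lem7} and the identity $\tr(t)=mt$ for $t\in\mathbb{F}_p$, to
$$N=p^{m-2}+\frac{G}{p^{2}}\sum_{y\in\mathbb{F}_p^{*}}\eta(y)\,\zeta_p^{-y/m}+\frac{G}{p^{2}}\sum_{y\in\mathbb{F}_p^{*}}\sum_{z\in\mathbb{F}_p^{*}}\eta(y)\,\zeta_p^{-y/m-z-mz^{2}(4y)^{-1}}.$$
Completing the square in $z$ turns the exponent $-y/m-z-mz^{2}(4y)^{-1}$ into $-\tfrac{m}{4y}\big(z+\tfrac{2y}{m}\big)^{2}$ (the term $-y/m$ cancels), so the inner $z$-sum becomes $\sum_{w\in\mathbb{F}_p}\zeta_p^{-\frac{m}{4y}w^{2}}-\zeta_p^{-y/m}$, and the $\mathbb{F}_p$-Gauss sum there equals $\overline{\eta}(-m)\overline{\eta}(y)\,\overline{G}$ by the $\mathbb{F}_p$-version of Lemma~\ref{Lem7} (with $\overline{G}$ as in Lemma~\ref{Lem5}).

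Finally I would evaluate the resulting elementary sums using $\sum_{y\in\mathbb{F}_p^{*}}\zeta_p^{cy}=-1$ for $c\neq0$, $\sum_{y\in\mathbb{F}_p^{*}}\overline{\eta}(y)=0$, $\sum_{y\in\mathbb{F}_p^{*}}\overline{\eta}(y)^{2}=p-1$, and $\sum_{y\in\mathbb{F}_p^{*}}\overline{\eta}(y)\zeta_p^{cy}=\overline{\eta}(c)\overline{G}$, combined with Lemma~\ref{Lem6}: for $m$ even one sets $\eta(y)=1$ and both correction sums reduce to $\mp1$, giving $N=p^{m-2}$; for $m$ odd one sets $\eta(y)=\overline{\eta}(y)$, and the single sum contributes $\overline{\eta}(-m)\overline{G}$ while the double sum contributes $\overline{\eta}(-m)(p-2)\overline{G}$, giving $N=p^{m-2}+p^{-2}\overline{\eta}(-m)(p-1)G\overline{G}$. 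Multiplying by $p-1$ yields the two cases of the claimed formula. The main obstacle is purely bookkeeping: carrying out the square completion correctly and tracking the quadratic-character factors (in particular $\overline{\eta}\big(-\tfrac{m}{4y}\big)=\overline{\eta}(-m)\overline{\eta}(y)$, using that $4$ is a square and $\overline{\eta}(y^{-1})=\overline{\eta}(y)$), then keeping the two parity cases apart; there is no conceptual difficulty beyond what already occurs in the proofs of Lemmas~\ref{Lem9} and~\ref{Lem10}.
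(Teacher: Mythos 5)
Your proposal is correct and follows essentially the same route as the paper: the paper writes $|V|=\sum_{c\in\mathbb{F}_p^{*}}|S_c|$ with $S_c=\{x:\tr(x)=c,\ \tr(x^{2})=c^{2}/m\}$ and evaluates the same orthogonality character sums via Lemma \ref{Lem7} and the $\mathbb{F}_p$-Gauss sum, the only cosmetic differences being that your homogeneity/orbit observation replaces the explicit summation over $c$, and you complete the square in $z$ where the paper instead applies the $\mathbb{F}_p$-version of Lemma \ref{Lem7} to the quadratic in $c$ after summing over $c$ --- both devices produce the same cancellation of the $\zeta_p^{-mz^{2}/(4y)}$ factor. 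Your case-by-case values $N=p^{m-2}$ (for $2\mid m$) and $N=p^{m-2}+p^{-2}\overline{\eta}(-m)(p-1)G\overline{G}$ (for $2\nmid m$), multiplied by $p-1$, agree with the paper's conclusion.
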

\begin{proof}
For $c\in \mathbb{F}_{p}^{*},$ set
$$ S_{c}=\{x\in \mathbb{F}_{p}:\tr(x)=c\textrm{\ and\ }\tr(x^{2})=c^{2}/m\}. $$
Then
$$ |V|=\sum_{c \in F_{p}^{\ast}}|S_{c}|. $$
By definition,  we have
\begin{align*}
|S_{c}|  &= p^{-2}\sum_{x\in \mathbb{F}_{q}}\left(\sum_{y \in F_{p}}\zeta_{p}^{y(\tr(x^{2})-\frac{c^{2}}{m})}\right)
\left(\sum_{z \in F_{p}}\zeta_{p}^{z(\tr(x)-c)}\right)   \\
  &=p^{-2}\sum_{x\in \mathbb{F}_{q}}\left(1+\sum_{y \in F_{p}^{\ast}}\zeta_{p}^{y(\tr(x^{2})-\frac{c^{2}}{m})}\right)
\left(1+\sum_{z \in F_{p}^{\ast}}\zeta_{p}^{z(\tr(x)-c)}\right) \\
 &= p^{m-2} + p^{-2}\sum_{y \in F_{p}^{\ast}}\sum_{x\in \mathbb{F}_{q}}\zeta_{p}^{y(\tr(x^{2})-\frac{c^{2}}{m})} +
p^{-2}\sum_{y \in F_{p}^{\ast}}\sum_{z \in F_{p}^{\ast}}\sum_{x\in \mathbb{F}_{q}}\zeta_{p}^{\tr(yx^{2}+zx)-\frac{yc^{2}}{m}-zc}.
\end{align*}
Let
$$ s_{c}=\sum_{y \in F_{p}^{\ast}}\sum_{x\in \mathbb{F}_{q}}\zeta_{p}^{y\left(\tr(x^{2})-\frac{c^{2}}{m}\right)},$$
$$ \overline{s}_{c}=\sum_{y \in F_{p}^{\ast}}\sum_{z \in F_{p}^{\ast}}\sum_{x\in \mathbb{F}_{q}}\zeta_{p}^{\tr(yx^{2}+zx)-\frac{yc^{2}}{m}-zc}. $$
It is straightforward to have that
$$ s_{c}=\sum_{y \in F_{p}^{\ast}}\zeta_{p}^{-\frac{yc^{2}}{m}}\sum_{x\in \mathbb{F}_{q}}\chi_{1}(yx^{2}). $$
By Lemma \ref{Lem7}, we obtain
\begin{align}\label{eq-s1}
 s_{c}&=\sum_{y \in F_{p}^{\ast}}\zeta_{p}^{-\frac{yc^{2}}{m}}\chi_1(0)\eta(y)G & \nonumber\\
&=\left\{\begin{array}{ll}
                                                                          G\sum_{y \in F_{p}^{\ast}}\zeta_{p}^{-\frac{yc^{2}}{m}}, & \textrm{if\ }  2\mid m \\
                                                                          \eta(-m)G\sum_{y \in F_{p}^{\ast}}\eta\left(-\frac{yc^{2}}{m}\right)\zeta_{p}^{-\frac{yc^{2}}{m}}, & \textrm{if\ }  2\nmid m
                                                                        \end{array}
                                                                        \right.\nonumber
                                                                        \\
&=\left\{\begin{array}{ll}
                                                                          -G, & \textrm{if\ }  2\mid m, \\
                                                                          \overline{\eta}(-m)G\overline{G}, & \textrm{if\ }  2\nmid m.
                                                                        \end{array}
                                                                        \right.
\end{align}
Meanwhile,
\begin{align*}
 \overline{s}_{c}&=\sum_{y \in F_{p}^{\ast}}\sum_{z \in F_{p}^{\ast}}\zeta_{p}^{-\frac{yc^{2}}{m}-zc}\sum_{x\in \mathbb{F}_{q}}\chi_{1}(yx^{2}+zx) \\
 &=\sum_{y \in F_{p}^{\ast}}\sum_{z \in F_{p}^{\ast}}\zeta_{p}^{-\frac{yc^{2}}{m}-zc}\chi_1\left(-\frac{z^{2}}{4y}\right)\eta(y)G.
\end{align*}
Hence,
\begin{align}\label{eq-s3}
 \sum_{c \in F_{p}^{\ast}}\overline{s}_{c}&=G\sum_{y \in F_{p}^{\ast}}\sum_{z \in F_{p}^{\ast}}\chi_1\left(-\frac{z^{2}}{4y}\right)\eta(y)\sum_{c \in F_{p}^{\ast}}\zeta_{p}^{-\frac{yc^{2}}{m}-zc}& \nonumber\\
 &=G\sum_{y \in F_{p}^{\ast}}\sum_{z \in F_{p}^{\ast}}\chi_1\left(-\frac{z^{2}}{4y}\right)\eta(y)\sum_{c \in F_{p}}\overline{\chi}_1\left(-\frac{yc^{2}}{m}-zc\right)-G\sum_{y \in F_{p}^{\ast}}\sum_{z \in F_{p}^{\ast}}\chi_1\left(-\frac{z^{2}}{4y}\right)\eta(y)& \nonumber\\
& =G\sum_{y \in F_{p}^{\ast}}\sum_{z \in F_{p}^{\ast}}\chi_1\left(-\frac{z^{2}}{4y}\right)\eta(y)\overline{\chi}_{1}\left(\frac{mz^{2}}{4y}\right)\overline{\eta}(-my)\overline{G}-G\sum_{y \in F_{p}^{\ast}}\sum_{z \in F_{p}^{\ast}}\chi_1\left(-\frac{z^{2}}{4y}\right)\eta(y) & \nonumber\\
&=\overline{\eta}(-m)G\overline{G}\sum_{y \in F_{p}^{\ast}}\sum_{z \in F_{p}^{\ast}}\eta(y)\overline{\eta}(y)-G\sum_{y \in F_{p}^{\ast}}\sum_{z \in F_{p}^{\ast}}\zeta_{p}^{-\frac{mz^{2}}{4y}}\eta(y) &\nonumber\\
& =\left\{\begin{array}{ll}
                                                                          (p-1)G , & \textrm{if\ }  2\mid m, \\
                                                                          \overline{\eta}(-m)(p-1)^{2}G\overline{G}-\overline{\eta}(-m)(p-1)G\overline{G}, & \textrm{if\ }  2\nmid m.
                                                                        \end{array}
                                                                        \right.
\end{align}
 We get
 \begin{align*}
 |V|&=\sum_{c \in F_{p}^{\ast}}|S_{c}|=\sum_{c \in F_{p}^{\ast}}(p^{m-2}+p^{-2}s_{c}+p^{-2}\overline{s}_{c})\\
 &=\sum_{c \in F_{p}^{\ast}}p^{m-2}+p^{-2}\sum_{c \in F_{p}^{\ast}}s_{c}+p^{-2}\sum_{c \in F_{p}^{\ast}}\overline{s}_{c}.
  \end{align*}
  By \eqref{eq-s1} and \eqref{eq-s3},  we get this lemma.
\end{proof}

\section{Proof of main results}
In this section, we will present a class of linear codes with three weights and  five weights over $\mathbb{F}_{p}$.

Recall that the defining set considered in this paper is defined  by
 $$
 D=\{x\in \mathbb{F}_{q}^{*}: \tr(x^{2}+x)=0\}.
 $$
Let $ n_{0}=|D|+1. $ Then
\begin{align*}
n_{0} &=\frac{1}{p}\sum_{x \in F_{q}}\left(\sum_{y \in F_{p}}\zeta_{p}^{y\tr(x^{2}+x)}\right)& \\
& = p^{m-1}+\frac{1}{p}\sum_{x \in F_{q}}\sum_{y \in F_{p}^{\ast}}\zeta_{p}^{y\tr(x^{2}+x)}. &
\end{align*}

Define $N_{b}=\{x\in \mathbb{F}_{q}: \tr(x^{2}+x)=0\textrm{\  and\ } \tr(bx)=0\}.$
Let $ \wt(\mathbf{c}_b)$ denote the Hamming weight of the  codeword
 $
 \mathbf{c}_b
 $
 of the code $\C_{D}.$ It can be easily checked that
\begin{equation} \label{eq-3.1}
\wt(\mathbf{c}_b)=n_0-|N_{b}|.
\end{equation}
For $b\in \mathbb{F}_{q}^{*},$ we have
\begin{align}\label{eq-weight}
|N_{b}| &= p^{-2}\sum_{x\in \mathbb{F}_{q}}\left(\sum_{y \in F_{p}}\zeta_{p}^{y\tr(x^{2}+x)}\right)
\left(\sum_{z \in F_{p}}\zeta_{p}^{z\tr(bx)}\right)  \nonumber \\
 &=p^{-2}\sum_{x\in \mathbb{F}_{q}}\left(1+\sum_{y \in F_{p}^{\ast}}\zeta_{p}^{y\tr(x^{2}+x)}\right)
\left(1+\sum_{z \in F_{p}^{\ast}}\zeta_{p}^{z\tr(bx)}\right)\nonumber \\
&= p^{m-2} + p^{-2}\sum_{y \in F_{p}^{\ast}}\sum_{x\in \mathbb{F}_{q}}\zeta_{p}^{y\tr(x^{2}+x)}
+ p^{-2}\sum_{z \in F_{p}^{\ast}}\sum_{x\in \mathbb{F}_{q}}\zeta_{p}^{z\tr(bx)} \nonumber \\
&\quad +
p^{-2}\sum_{y \in F_{p}^{\ast}}\sum_{z \in F_{p}^{\ast}}\sum_{x\in \mathbb{F}_{q}}\zeta_{p}^{\tr(yx^{2}+yx+bzx)} \nonumber\\
&= p^{m-2} + p^{-2}\sum_{y \in F_{p}^{\ast}}\sum_{x\in \mathbb{F}_{q}}\zeta_{p}^{y\tr(x^{2}+x)} +
p^{-2}\sum_{y \in F_{p}^{\ast}}\sum_{z \in F_{p}^{\ast}}\sum_{x\in \mathbb{F}_{q}}\zeta_{p}^{\tr(yx^{2}+yx+bzx)}.
\end{align}

Our task in this section is to  calculate $n_{0}$, $|N_{b}|$ and give the proof of the main results.

\subsection{The first case of three-weight linear codes}
In this subsection,  suppose $ 2\mid m$ and $p\mid m. $
To determine the weight distribution of $\C_{D}$ of \eqref{eq-a0}, the following  lemma is needed.

\begin{Lem}\label{Lem13}
 Let $ b\in \mathbb{F}_{q}^{*}. $ Then
$$
|N_{b}|=\left\{\begin{array}{ll}
                                                                          p^{m-2},& \ \textrm{if\ }  \tr(b^{2})=0\textrm{\ and\ }\tr(b)\neq0 \\
                                                                          & \ \textrm{or\ }\tr(b^{2})\neq0\textrm{\ and\ }\tr(b)=0, \\
                                                                          p^{m-2}-(-1)^{\frac{m(p-1)}{4}}(p-1)p^{\frac{m-2}{2}}, & \textrm{if\ } \tr(b^{2})=0\textrm{\ and\ }\tr(b)=0, \\
                                                                          p^{m-2}-(-1)^{\frac{m(p-1)}{4}}p^{\frac{m-2}{2}}, & \textrm{if\ }  \tr(b^{2})\neq0\textrm{\ and\ }\tr(b)\neq0 .
                                                                        \end{array}
                                                                        \right.
$$
\end{Lem}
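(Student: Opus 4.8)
The plan is to read off $|N_{b}|$ directly from the character–sum identity \eqref{eq-weight}, which already expresses $|N_{b}|$ as $p^{m-2}$ plus two correction terms: the double sum $\sum_{y\in\mathbb{F}_{p}^{*}}\sum_{x\in\mathbb{F}_{q}}\zeta_{p}^{y\tr(x^{2}+x)}$ and the triple sum $B=\sum_{y\in\mathbb{F}_{p}^{*}}\sum_{z\in\mathbb{F}_{p}^{*}}\sum_{x\in\mathbb{F}_{q}}\chi_{1}(yx^{2}+yx+bzx)$ from Lemma~\ref{Lem9}. Since we are in the regime $2\mid m$ and $p\mid m$, Lemma~\ref{Lem8} evaluates the double sum to $(p-1)G$, so it contributes $p^{-2}(p-1)G$ to $|N_{b}|$. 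Thus it remains to substitute the value of $B$ supplied by Lemma~\ref{Lem9}, and this forces the four-case split in the statement according to whether $\tr(b^{2})$ and $\tr(b)$ vanish.

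Next I would go through the four cases using the $2\mid m$, $p\mid m$ branch of Lemma~\ref{Lem9}. If $\tr(b^{2})\neq0$ and $\tr(b)=0$, then $B=-(p-1)G$, and the two correction terms $p^{-2}(p-1)G$ and $p^{-2}B$ cancel, giving $|N_{b}|=p^{m-2}$; the case $\tr(b^{2})=0$, $\tr(b)\neq0$ is identical since there $B=-(p-1)G$ as well. If $\tr(b^{2})=0$ and $\tr(b)=0$, then $B=(p-1)^{2}G$, so $|N_{b}|=p^{m-2}+p^{-2}(p-1)G+p^{-2}(p-1)^{2}G=p^{m-2}+p^{-1}(p-1)G$. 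Finally, if $\tr(b^{2})\neq0$ and $\tr(b)\neq0$, then $B=\overline{\eta}(-1)G\overline{G}^{2}-(p-1)G$, so $|N_{b}|=p^{m-2}+p^{-2}\overline{\eta}(-1)G\overline{G}^{2}$.

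The last step is to convert these Gauss-sum expressions into the closed forms stated. From Lemma~\ref{Lem5} with $m$ even one has $G=-(-1)^{\frac{m(p-1)}{4}}p^{m/2}$, hence $p^{-1}(p-1)G=-(-1)^{\frac{m(p-1)}{4}}(p-1)p^{\frac{m-2}{2}}$, which is exactly the value in the case $\tr(b^{2})=\tr(b)=0$. Likewise Lemma~\ref{Lem5} gives $\overline{G}^{2}=(-1)^{\frac{p-1}{2}}p=\overline{\eta}(-1)p$, so $\overline{\eta}(-1)G\overline{G}^{2}=\overline{\eta}(-1)^{2}Gp=Gp$ and therefore $p^{-2}\overline{\eta}(-1)G\overline{G}^{2}=p^{-1}G=-(-1)^{\frac{m(p-1)}{4}}p^{\frac{m-2}{2}}$, matching the case $\tr(b^{2})\neq0$, $\tr(b)\neq0$. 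There is no real obstacle here beyond bookkeeping: the only point needing a little care is the identity $\overline{\eta}(-1)\overline{G}^{2}=p$ together with checking that the sign in $G$ produced by Lemma~\ref{Lem5} agrees with the normalization $G=-(-1)^{\frac{m(p-1)}{4}}p^{m/2}$ used throughout; once that is in place, the four cases collapse to the three displayed values.
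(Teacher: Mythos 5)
Your computation is correct and follows exactly the route the paper intends: it plugs the $2\mid m$, $p\mid m$ branches of Lemmas \ref{Lem8} and \ref{Lem9} into \eqref{eq-weight} and then uses Lemma \ref{Lem5} (via $\overline{\eta}(-1)\overline{G}^{2}=p$) to reach the closed forms; the paper simply omits these details. Nothing further is needed.
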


\begin{proof}
The desired result follows directly from \eqref{eq-weight}, Lemmas \ref{Lem5},  \ref{Lem8} and  \ref{Lem9}. We omit the details.
\end{proof}

After the preparations above, we proceed to prove Theorem \ref{Thm1}.
 By Lemma \ref{Lem8}, if $ 2\mid m$ and  $p\mid m$,  we have
$$
n_{0}=p^{m-1}+p^{-1}(p-1)G.
$$
  Combining  \eqref{eq-3.1}, \eqref{eq-weight} and Lemma \ref{Lem13}, we get
\begin{align*}
 &\wt(c_b)=n_0-|N_b| \\
 &\in\left\{(p-1)p^{m-2}+p^{-1}(p-1)G, (p-1)p^{m-2}, (p-1)p^{m-2}+p^{-1}(p-2)G\right\}.
\end{align*}
Set
\begin{align*}
 \omega_{1}&=(p-1)p^{m-2}+p^{-1}(p-1)G,\\
  \omega_{2}&=(p-1)p^{m-2}, \\
  \omega_{3}&=(p-1)p^{m-2}+p^{-1}(p-2)G.
 \end{align*}
By Lemma \ref{Lem13}, we obtain
\begin{align*}
 A_{\omega_{1}}&=|N(0,\overline{0})|+|N(\overline{0},0)|,\\
 A_{\omega_{2}}&=|N(0,0)|-1,  \\
 A_{\omega_{3}}&=|N(\overline{0},\overline{0})|.
 \end{align*}
Then the results in Theorem \ref{Thm1} follow from Lemmas \ref{Lem5} and  \ref{Lem11}.

\subsection{The second case of three-weight linear codes}
In this subsection,  assume   $ 2\mid m$ and $p\nmid m$. By \eqref{eq-weight}, Lemmas \ref{Lem8} and  \ref{Lem9}, it is easy to get the following lemma.

\begin{Lem}\label{Lem14}
\ Let $ b\in \mathbb{F}_{q}^{*}$ and the symbols be the same as before.  Then we have
$$
|N_{b}|=\left\{\begin{array}{ll}
                                                                          p^{m-2}, & \textrm{if\ }  \tr(b^{2})=0\textrm{\ and\ }\tr(b)\neq0 \textrm{\ or\ }\\
                                                                          & \tr(b^{2})\neq0 \textrm{\ and\ } (\tr(b))^{2}=m\tr(b^{2}), \\
                                                                          p^{m-2}-p^{-1}G, & \textrm{if\ }  \tr(b^{2})=0 \textrm{\ and\ }\tr(b)=0,  \\
                                                                          p^{m-2}+p^{-2}\overline{\eta}(m\tr(b^{2})-(\tr(b))^{2})G\overline{G}^{2}, &\textrm{if\ } \tr(b^{2})\neq0\textrm{\ and\ } (\tr(b))^{2}\neq m\tr(b^{2}), \\
                                                                          p^{m-2}+p^{-2}\overline{\eta}(m\tr(b^{2}))G\overline{G}^{2}, &\textrm{if\ } \tr(b^{2})\neq0\textrm{\ and\ }\tr(b)=0.
                                                                        \end{array}
                                                                        \right.
$$
\end{Lem}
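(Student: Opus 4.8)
The plan is to evaluate the three contributions to $|N_b|$ in \eqref{eq-weight} separately for $b\in\F_q^*$, namely
$$
|N_b| = p^{m-2} + p^{-2}\sum_{y\in\F_p^*}\sum_{x\in\F_q}\zeta_p^{y\tr(x^2+x)} + p^{-2}B,
$$
where $B$ is exactly the quantity studied in Lemma~\ref{Lem9}. First I would invoke Lemma~\ref{Lem8} in the case $2\mid m$, $p\nmid m$ to get $\sum_{y\in\F_p^*}\sum_{x\in\F_q}\zeta_p^{y\tr(x^2+x)} = -G$, so the first two terms combine to $p^{m-2} - p^{-2}G$. Then the value of $|N_b|$ is determined entirely by $p^{-2}B$, and I would split into the four cases of Lemma~\ref{Lem9} according to whether $\tr(b^2)$ and $\tr(b)$ vanish, always using the subcase $2\mid m$ and $p\nmid m$.

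Concretely: in case (4) of Lemma~\ref{Lem9} ($\tr(b^2)=0$, $\tr(b)=0$) we have $B = -(p-1)G$, so $|N_b| = p^{m-2} - p^{-2}G - p^{-2}(p-1)G = p^{m-2} - p^{-1}G$, matching the second line. In case (3) ($\tr(b^2)=0$, $\tr(b)\neq 0$) we have $B=G$, giving $|N_b| = p^{m-2} - p^{-2}G + p^{-2}G = p^{m-2}$, the first listed subcase. In case (2) with $(\tr(b))^2 = m\tr(b^2)$ we again get $B=G$, hence $|N_b|=p^{m-2}$, the other half of the first subcase. In case (2) with $(\tr(b))^2 \neq m\tr(b^2)$ we get $B = \overline\eta(m\tr(b^2)-(\tr(b))^2)G\overline G^2 + G$, so the $G$ cancels and $|N_b| = p^{m-2} + p^{-2}\overline\eta(m\tr(b^2)-(\tr(b))^2)G\overline G^2$, the third line. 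Finally, in case (1) ($\tr(b^2)\neq 0$, $\tr(b)=0$) we get $B = \overline\eta(m\tr(b^2))G\overline G^2 + G$, giving $|N_b| = p^{m-2} + p^{-2}\overline\eta(m\tr(b^2))G\overline G^2$, the last line. Assembling these five outcomes yields the statement.

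The routine-but-careful part is just bookkeeping: reading off the correct branch of Lemma~\ref{Lem9} and checking that in every case the stray $+G$ or $-(p-1)G$ term from $B$ combines cleanly with the $-p^{-2}G$ coming from the $\sum\sum \zeta_p^{y\tr(x^2+x)}$ term (using Lemma~\ref{Lem8}). There is no genuine obstacle here since all the hard character-sum evaluations have already been packaged into Lemmas~\ref{Lem8} and~\ref{Lem9}; the only thing to watch is the partition of cases, in particular that when $\tr(b^2)=0$ the condition $(\tr(b))^2 = m\tr(b^2)$ forces $\tr(b)=0$ (as $p\nmid m$), so the subcase $\tr(b^2)=0,\tr(b)\neq0$ is genuinely disjoint and handled by part~(3) rather than part~(2). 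Hence the proof reduces to the sentence already given in the text: the result follows directly from \eqref{eq-weight} and Lemmas~\ref{Lem8} and~\ref{Lem9}.
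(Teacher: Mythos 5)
Your proposal is correct and follows exactly the route the paper takes: the paper's entire ``proof'' of Lemma~\ref{Lem14} is the sentence that it follows from \eqref{eq-weight} together with Lemmas~\ref{Lem8} and~\ref{Lem9}, and your case-by-case bookkeeping (including the cancellation of the stray $\pm G$ terms against the $-p^{-2}G$ from Lemma~\ref{Lem8}) is precisely the omitted verification. All five outcomes check out, so nothing further is needed.
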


We are now turning to the proof of  Theorem \ref{Thm2}.
If  $ 2\mid m$ and  $p\nmid m $,  by Lemma \ref{Lem8}, we have
$$ n_{0}=p^{m-1}-p^{-1}G. $$

It follows from   \eqref{eq-3.1} and Lemma \ref{Lem14} that
$$ wt(c_b)\in \left\{(p-1)p^{m-2}-p^{-1}G, (p-1)p^{m-2}, (p-1)p^{m-2}-2p^{-1}G\right\}.$$
Suppose
\begin{align*}
\omega_{1}&=(p-1)p^{m-2}-p^{-1}G, \\
\omega_{2}&=(p-1)p^{m-2}, \\
\omega_{3}&=(p-1)p^{m-2}-2p^{-1}G.
\end{align*}
By Lemmas \ref{Lem11},  \ref{Lem12} and  \ref{Lem14}, we have
$$ A_{\omega_{1}}=|N(0,\overline{0})|+|V|=(p-1)(2p^{m-2}+p^{-1}G). $$
It is easy to check  that the minimum distance of the dual code $ \mathcal{C}^{\perp}_{D} $ of $\mathcal{C}_{D}$ is equal to $ 2 $.  By the first two Pless Power Moments(\cite{HP}, p. 260) the frequency $A_{w_i}$ of $w_{i}$ satisfies the following equations:
\begin{eqnarray}\label{eq-3.2}
\left\{\begin{array}{l}
         A_{w_1}+A_{w_2}+A_{w_3}=p^{m}-1, \\
         w_1A_{w_1}+w_2A_{w_2}+w_3A_{w_3}=p^{m-1}(p-1)n,
       \end{array}
       \right.
\end{eqnarray}
where $n=p^{m-1}-p^{-1}G-1$.
 A simple calculation leads to  the weight distribution of \autoref{tal:weightdistribution2}.
The proof of Theorem \ref{Thm2} is completed.

\subsection{The first case of 5-weight linear codes}
In this subsection,  set  $ 2\nmid m$ and $p\mid m $.  By \eqref{eq-weight}, Lemmas \ref{Lem8} and  \ref{Lem9}, we get the following lemma.
\begin{Lem} \label{Lem15}
\ Let $ b\in \mathbb{F}_{q}^{*}, $ then
$$
|N_{b}|=\left\{\begin{array}{ll}
                                                                          p^{m-2}, & \textrm{if\ } \ \tr(b^{2})=0, \\
                                                                          p^{m-2}-p^{-2}\overline{\eta}(-1)G\overline{G}, & \textrm{if\ } \ \tr(b^{2})\neq0,\ \tr(b)\neq0, \   \overline{\eta}(\tr(b^{2}))=1,\\
                                                                          p^{m-2}+p^{-2}\overline{\eta}(-1)G\overline{G}, & \textrm{if\ } \ \tr(b^{2})\neq0,\ \tr(b)\neq0,  \  \overline{\eta}(\tr(b^{2}))=-1,\\
                                                                          p^{m-2}+p^{-2}\overline{\eta}(-1)(p-1)G\overline{G}, & \textrm{if\ } \ \tr(b^{2})\neq0,\ \tr(b)=0,\    \overline{\eta}(\tr(b^{2}))=1,\\
                                                                          p^{m-2}-p^{-2}\overline{\eta}(-1)(p-1)G\overline{G}, & \textrm{if\ } \ \tr(b^{2})\neq0,\ \tr(b)=0,\    \overline{\eta}(\tr(b^{2}))=-1.\\
                                                                        \end{array}
                                                                        \right.
$$
\end{Lem}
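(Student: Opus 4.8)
The plan is to obtain the five listed values of $|N_b|$ by specializing the master formula \eqref{eq-weight} to the regime $2\nmid m$, $p\mid m$, and then feeding in the two auxiliary character-sum evaluations that are already at our disposal.

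First I would recall that for $b\in\F_q^*$ formula \eqref{eq-weight} reads
$$
|N_b|=p^{m-2}+p^{-2}\sum_{y\in\F_p^*}\sum_{x\in\F_q}\zeta_p^{y\tr(x^2+x)}+p^{-2}B,
$$
where $B=\sum_{y\in\F_p^*}\sum_{z\in\F_p^*}\sum_{x\in\F_q}\chi_1(yx^2+yx+bzx)$ is precisely the sum evaluated in Lemma~\ref{Lem9}. Since we are assuming $2\nmid m$ and $p\mid m$, the third line of Lemma~\ref{Lem8} says the middle single sum vanishes, so the whole computation collapses to $|N_b|=p^{m-2}+p^{-2}B$, and it only remains to read $B$ off Lemma~\ref{Lem9} in the appropriate case.

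Next I would walk through the four configurations of $(\tr(b^2),\tr(b))$ that Lemma~\ref{Lem9} distinguishes, always selecting the ``$2\nmid m$ and $p\mid m$'' branch. If $\tr(b^2)=0$ (parts (3) and (4) of Lemma~\ref{Lem9}), then $B=0$ irrespective of $\tr(b)$, whence $|N_b|=p^{m-2}$; this is the first line of the claim, and it explains why $\tr(b)$ is immaterial there. If $\tr(b^2)\neq0$ and $\tr(b)\neq0$ (part (2)), then $B=-\overline{\eta}(-\tr(b^2))G\overline{G}$; if $\tr(b^2)\neq0$ and $\tr(b)=0$ (part (1)), then $B=\overline{\eta}(-\tr(b^2))(p-1)G\overline{G}$. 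To match the table I would split these two situations further by writing $\overline{\eta}(-\tr(b^2))=\overline{\eta}(-1)\,\overline{\eta}(\tr(b^2))$ and separating the sub-cases $\overline{\eta}(\tr(b^2))=1$ and $\overline{\eta}(\tr(b^2))=-1$; substituting each resulting value of $B$ into $|N_b|=p^{m-2}+p^{-2}B$ then yields exactly the remaining four rows of the statement.

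There is no genuine obstacle here: the argument is a substitution followed by sign bookkeeping. The only points that call for care are (i) checking that in each of the four parts one picks out the correct ``$2\nmid m$, $p\mid m$'' sub-row of Lemma~\ref{Lem9}, since the different parts list different numbers of sub-cases; and (ii) noting that, because $p\mid m$ forces $\tr(1)=m\equiv0$, the Legendre factor $\overline{\eta}(-m)$ that appears in the $p\nmid m$ cases never intervenes, so the final answer involves only $\overline{\eta}(-1)$, $\overline{\eta}(\tr(b^2))$ and a single factor $G\overline{G}$. This also makes transparent why the two ``$\tr(b)=0$'' rows carry an extra factor $p-1$ compared with the ``$\tr(b)\neq0$'' rows.
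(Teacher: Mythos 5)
Your proposal is correct and follows exactly the route the paper intends: the paper states Lemma~\ref{Lem15} with only the remark that it follows from \eqref{eq-weight}, Lemma~\ref{Lem8} and Lemma~\ref{Lem9}, and you have simply carried out that substitution, correctly noting that the middle sum vanishes when $2\nmid m$ and $p\mid m$ and then splitting parts (1) and (2) of Lemma~\ref{Lem9} according to the sign of $\overline{\eta}(\tr(b^{2}))$. The case-by-case bookkeeping matches all five rows of the statement.
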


In order to determine the weight distribution of $ \C_{D} $ of \eqref{eq-a0} in Theorem \ref{Thm3}, we need  the next two lemmas.

\begin{Lem} [see \cite{DD14}, Lemma 9]\label{Lem16}
For each $ c \in \mathbb{F}_{p},$ set
$$\ u_{c}=|\{ x \in F_{q}: \tr(x^{2})=c \}|.$$
 If  $m $ is odd,
 then
 $$ u_{c}=p^{m-1}+p^{-1}\overline{\eta}(-1)\overline{\eta}(c)G\overline{G}. $$
\end{Lem}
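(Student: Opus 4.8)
The plan is to evaluate $u_c$ by the usual additive-character method and reduce the resulting one-variable exponential sum to the quadratic Gauss sum $\overline{G}$ of $\mathbb{F}_p$. By the orthogonality of the additive characters of $\mathbb{F}_p$,
$$
u_c=\frac1p\sum_{x\in \mathbb{F}_q}\sum_{y\in \mathbb{F}_p}\zeta_p^{y(\tr(x^2)-c)}
=p^{m-1}+\frac1p\sum_{y\in \mathbb{F}_p^{*}}\zeta_p^{-yc}\sum_{x\in \mathbb{F}_q}\chi_1(yx^2),
$$
where the $y=0$ term contributes $p^{m-1}$. Applying Lemma \ref{Lem7} to $f(x)=yx^2$ gives $\sum_{x\in \mathbb{F}_q}\chi_1(yx^2)=\chi_1(0)\eta(y)G=\eta(y)G$, hence
$$
u_c=p^{m-1}+\frac{G}{p}\sum_{y\in \mathbb{F}_p^{*}}\eta(y)\zeta_p^{-yc}.
$$

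Since $m$ is odd, Lemma \ref{Lem6} allows me to replace $\eta(y)$ by $\overline{\eta}(y)$ for every $y\in \mathbb{F}_p^{*}$, so the inner sum becomes $T_c:=\sum_{y\in \mathbb{F}_p^{*}}\overline{\eta}(y)\,\overline{\chi}_1(-yc)$, a character sum over $\mathbb{F}_p$ alone. If $c=0$ then $T_0=\sum_{y\in \mathbb{F}_p^{*}}\overline{\eta}(y)=0$, which matches the asserted formula because $\overline{\eta}(0)=0$. If $c\neq 0$, I substitute $t=-cy$, a bijection of $\mathbb{F}_p^{*}$; using $\overline{\eta}(-t/c)=\overline{\eta}(-1)\overline{\eta}(c)\overline{\eta}(t)$ (here $\overline{\eta}(c)^{-1}=\overline{\eta}(c)$ since $\overline{\eta}(c)=\pm1$) I obtain $T_c=\overline{\eta}(-1)\overline{\eta}(c)\sum_{t\in \mathbb{F}_p^{*}}\overline{\eta}(t)\overline{\chi}_1(t)=\overline{\eta}(-1)\overline{\eta}(c)\overline{G}$. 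Substituting $T_c$ back gives $u_c=p^{m-1}+p^{-1}\overline{\eta}(-1)\overline{\eta}(c)G\overline{G}$ in all cases, which is the claim.

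There is no genuine obstacle in this argument: the two substantive inputs are Lemma \ref{Lem7}, which evaluates $\sum_{x\in \mathbb{F}_q}\chi_1(yx^2)$ as a Gauss sum, and Lemma \ref{Lem6}, which collapses $\eta$ to $\overline{\eta}$ when $m$ is odd; the rest is routine manipulation of character sums. The only points deserving a moment's care are the change of variables $t=-cy$ and the observation that the convention $\overline{\eta}(0)=0$ lets the single closed form also cover the case $c=0$.
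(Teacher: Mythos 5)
Your proof is correct. The paper itself does not prove this lemma---it only cites it from \cite{DD14} (Lemma 9)---but your argument is exactly the standard character-sum computation that the paper uses for the analogous counting results (Lemmas \ref{Lem10} and \ref{Lem12}): orthogonality of additive characters, Lemma \ref{Lem7} to evaluate $\sum_{x}\chi_1(yx^2)=\eta(y)G$, Lemma \ref{Lem6} to replace $\eta$ by $\overline{\eta}$ for odd $m$, and the substitution $t=-cy$ to produce $\overline{\eta}(-1)\overline{\eta}(c)\overline{G}$, with the convention $\overline{\eta}(0)=0$ absorbing the case $c=0$.
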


\begin{Lem}\label{Lem17}
\ Let $ m $ be odd with $p\mid m.$ For each $ c\in \mathbb{F}_{p}^{*}, $ set
$$ v_{c}=|\{x\in F_{q}: \tr(x^{2})=c, \tr(x)=0\}|. $$
Then
$$ v_{c}=p^{m-2}+p^{-1}\overline{\eta}(-1)\overline{\eta}(c)G\overline{G}. $$
\end{Lem}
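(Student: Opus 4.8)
The plan is to evaluate $v_c$ directly by the additive-character sieve, exactly as in the proofs of Lemmas~\ref{Lem10} and~\ref{Lem12}. Using orthogonality to detect the two trace conditions and then splitting off the trivial characters,
\begin{align*}
v_c &= p^{-2}\sum_{x\in\mathbb{F}_q}\Bigl(\sum_{y\in\mathbb{F}_p}\zeta_{p}^{y(\tr(x^2)-c)}\Bigr)\Bigl(\sum_{z\in\mathbb{F}_p}\zeta_{p}^{z\tr(x)}\Bigr)\\
&= p^{m-2} + p^{-2}\sum_{y\in\mathbb{F}_p^{*}}\zeta_{p}^{-yc}\sum_{x\in\mathbb{F}_q}\chi_1(yx^2)
 + p^{-2}\sum_{z\in\mathbb{F}_p^{*}}\sum_{x\in\mathbb{F}_q}\chi_1(zx)\\
&\quad + p^{-2}\sum_{y\in\mathbb{F}_p^{*}}\sum_{z\in\mathbb{F}_p^{*}}\zeta_{p}^{-yc}\sum_{x\in\mathbb{F}_q}\chi_1(yx^2+zx).
\end{align*}
The third term vanishes because $\chi_1(z\cdot)$ is a nontrivial additive character of $\mathbb{F}_q$ whenever $z\ne0$.

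Next I would apply Lemma~\ref{Lem7} to the two remaining inner sums over $\mathbb{F}_q$: one gets $\sum_{x}\chi_1(yx^2)=\eta(y)G$ and $\sum_{x}\chi_1(yx^2+zx)=\chi_1\bigl(-z^2(4y)^{-1}\bigr)\eta(y)G$. The hypothesis $p\mid m$ enters decisively here: for $y,z\in\mathbb{F}_p^{*}$ the element $-z^2(4y)^{-1}$ lies in $\mathbb{F}_p$, so $\tr\bigl(-z^2(4y)^{-1}\bigr)=m\cdot\bigl(-z^2(4y)^{-1}\bigr)\equiv0\pmod p$, whence $\chi_1\bigl(-z^2(4y)^{-1}\bigr)=1$. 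Hence the inner sum over $z$ contributes only a factor $p-1$, and the second and fourth terms together equal $p^{-2}\bigl(1+(p-1)\bigr)G\sum_{y\in\mathbb{F}_p^{*}}\eta(y)\zeta_{p}^{-yc}=p^{-1}G\sum_{y\in\mathbb{F}_p^{*}}\eta(y)\overline{\chi}_1(-yc)$.

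Finally, since $m$ is odd, Lemma~\ref{Lem6} gives $\eta(y)=\overline{\eta}(y)$ for $y\in\mathbb{F}_p^{*}$; substituting $u=-cy$ and using the multiplicativity of $\overline{\eta}$ together with $\overline{\eta}(c^{-1})=\overline{\eta}(c)$ turns $\sum_{y\in\mathbb{F}_p^{*}}\overline{\eta}(y)\overline{\chi}_1(-yc)$ into $\overline{\eta}(-1)\overline{\eta}(c)\overline{G}$. This yields $v_c=p^{m-2}+p^{-1}\overline{\eta}(-1)\overline{\eta}(c)G\overline{G}$, as claimed. There is no genuine obstacle; the only points requiring care are the cancellation $\tr\bigl(-z^2(4y)^{-1}\bigr)=0$ coming from $p\mid m$, and the bookkeeping of the quadratic twist $\overline{\eta}(-1)\overline{\eta}(c)$ together with the factor $(p-1)+1=p$ that upgrades the $p^{-2}$ to $p^{-1}$.
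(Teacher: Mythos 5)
Your proof is correct and follows exactly the route the paper intends: the paper omits the proof of Lemma~\ref{Lem17}, saying only that it is similar to Lemma~\ref{Lem10}, and your argument is precisely that character-sieve computation (orthogonality, Lemma~\ref{Lem7}, the vanishing $\chi_1(-z^{2}(4y)^{-1})=1$ from $p\mid m$, and Lemma~\ref{Lem6} to reduce $\eta$ to $\overline{\eta}$). The bookkeeping of the factor $p^{-2}(1+(p-1))=p^{-1}$ and the twist $\overline{\eta}(-1)\overline{\eta}(c)$ is accurate.
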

\begin{proof}
The proof of this lemma is similar to that of Lemma \ref{Lem10} and  we omit the details.
\end{proof}

Now we are ready to  prove  Theorem \ref{Thm3}.
Note that $ 2\nmid m$ and  $p\mid m. $ By Lemma \ref{Lem8}, we have $ n_{0}=p^{m-1}. $
It follows from \eqref{eq-3.1} and Lemma \ref{Lem15} that $\wt(c_b)=n_0-|N_b|$
\begin{align*}
 \in \left\{(p-1)p^{m-2}, (p-1)p^{m-2}\pm \frac{1}{p^{2}}\overline{\eta}(-1)G\overline{G}, (p-1)p^{m-2}\pm \frac{1}{p^{2}}\overline{\eta}(-1)(p-1)G\overline{G}\right\}.
 \end{align*}
Suppose
\begin{align*}
\omega_{1}&=(p-1)p^{m-2},\\
\omega_{2}&=(p-1)p^{m-2}+\frac{1}{p^{2}}\overline{\eta}(-1)G\overline{G},\\
\omega_{3}&=(p-1)p^{m-2}-\frac{1}{p^{2}}\overline{\eta}(-1)G\overline{G}, \\ \omega_{4}&=(p-1)p^{m-2}-\frac{1}{p^{2}}\overline{\eta}(-1)(p-1)G\overline{G},\\
\omega_{5}&=(p-1)p^{m-2}+\frac{1}{p^{2}}\overline{\eta}(-1)(p-1)G\overline{G}.
\end{align*}
By Lemmas \ref{Lem15}, \ref{Lem16} and \ref{Lem17}, we have $ A_{\omega_{1}}=p^{m-1}-1 $ and the following system of equations:
\begin{eqnarray}\label{eq-3.3}
\left\{\begin{array}{l}
         A_{w_2}+A_{w_4}=\frac{1}{2}(p-1)(p^{m-1}+p^{-1}\overline{\eta}(-1)G\overline{G}), \\
         A_{w_3}+A_{w_5}=\frac{1}{2}(p-1)(p^{m-1}-p^{-1}\overline{\eta}(-1)G\overline{G}), \\
         A_{w_4}=\frac{1}{2}(p-1)(p^{m-2}+p^{-1}\overline{\eta}(-1)G\overline{G}), \\
         A_{w_5}=\frac{1}{2}(p-1)(p^{m-2}-p^{-1}\overline{\eta}(-1)G\overline{G}).
       \end{array}
       \right.
\end{eqnarray}
Solving the system of equations of \eqref{eq-3.3} proves the weight distribution of \autoref{tal:weightdistribution3}.

\subsection{The second case of five-weight linear codes}
In this subsection, put $ 2\nmid m$ and  $p\nmid m. $  The last auxiliary result we need is the following.
\begin{Lem} \label{Lem18}
\ Let $ b\in \mathbb{F}_{q}^{*}$ and the symbols be the same as before. Then
$$
|N_{b}|=\left\{\begin{array}{ll}
                                                                          p^{m-2}, & \textrm{if\ }  \tr(b^{2})=0\textrm{\ and\ } \tr(b)\neq0, \\
                                                                          p^{m-2}+p^{-1}\overline{\eta}(-m)G\overline{G}, & \textrm{if\ } \tr(b^{2})=0\textrm{\ and\ }  \tr(b)=0  \\
                                                                          p^{m-2}-p^{-2}\overline{\eta}(-\tr(b^{2}))G\overline{G}, & \textrm{if\ }\tr(b^{2})\neq0,\ \tr(b)\neq0\textrm{\ and\ }    (\tr(b))^{2}\neq m\tr(b^{2}) \\
                                                                            & \textrm{or\ } \tr(b^{2})\neq0\textrm{\ and\ }  \tr(b)=0, \\
                                                                          p^{m-2}+p^{-2}\overline{\eta}(-m)(p-1)G\overline{G}, & \textrm{if\ }\tr(b^{2})\neq0,  \tr(b)\neq0\textrm{\ and\ }  (\tr(b))^{2}=m\tr(b^{2}).
                                                                        \end{array}
                                                                        \right.
$$
\end{Lem}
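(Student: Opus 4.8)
The plan is to specialize the general identity \eqref{eq-weight} to the present case $2\nmid m$ and $p\nmid m$. Recall that \eqref{eq-weight} expresses
\[
|N_b| = p^{m-2} + p^{-2}\sum_{y\in\F_p^{*}}\sum_{x\in\F_q}\zeta_p^{y\tr(x^{2}+x)} + p^{-2}B,
\]
where $B=\sum_{y\in\F_p^{*}}\sum_{z\in\F_p^{*}}\sum_{x\in\F_q}\chi_1(yx^{2}+yx+bzx)$ is precisely the sum evaluated in Lemma \ref{Lem9}. First I would apply Lemma \ref{Lem8} to the middle term: for $2\nmid m$ and $p\nmid m$ it equals $\overline{\eta}(-m)G\overline{G}$. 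It then remains to substitute the value of $B$ supplied by Lemma \ref{Lem9}, branching according to whether $\tr(b^{2})$ and $\tr(b)$ vanish.

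Concretely, I would run through the four parts of Lemma \ref{Lem9}, each time taking the sub-case ``$2\nmid m$ and $p\nmid m$''. If $\tr(b^{2})=0$ and $\tr(b)\neq0$, then $B=-\overline{\eta}(-m)G\overline{G}$ cancels the middle term and leaves $|N_b|=p^{m-2}$. If $\tr(b^{2})=0$ and $\tr(b)=0$, then $B=\overline{\eta}(-m)(p-1)G\overline{G}$, so $|N_b|=p^{m-2}+p^{-1}\overline{\eta}(-m)G\overline{G}$. If $\tr(b^{2})\neq0$ and $\tr(b)=0$, or $\tr(b^{2})\neq0$, $\tr(b)\neq0$ and $(\tr(b))^{2}\neq m\tr(b^{2})$, then in both instances $B=-\big(\overline{\eta}(-\tr(b^{2}))+\overline{\eta}(-m)\big)G\overline{G}$, whence $|N_b|=p^{m-2}-p^{-2}\overline{\eta}(-\tr(b^{2}))G\overline{G}$. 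Finally, if $\tr(b^{2})\neq0$, $\tr(b)\neq0$ and $(\tr(b))^{2}=m\tr(b^{2})$, then $B=\big(\overline{\eta}(-\tr(b^{2}))(p-1)-\overline{\eta}(-m)\big)G\overline{G}$, so $|N_b|=p^{m-2}+p^{-2}\overline{\eta}(-\tr(b^{2}))(p-1)G\overline{G}$.

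To bring the last expression into the stated form I would make the one non-mechanical observation: when $(\tr(b))^{2}=m\tr(b^{2})$ with $\tr(b)\neq0$, the element $m\tr(b^{2})=(\tr(b))^{2}$ is a nonzero square in $\F_p$, hence $\overline{\eta}(m\tr(b^{2}))=1$, i.e.\ $\overline{\eta}(m)=\overline{\eta}(\tr(b^{2}))$, and multiplying by $\overline{\eta}(-1)$ gives $\overline{\eta}(-m)=\overline{\eta}(-\tr(b^{2}))$; substituting this rewrites the value as $p^{m-2}+p^{-2}\overline{\eta}(-m)(p-1)G\overline{G}$. I would also note that when $\tr(b)=0$ one automatically has $(\tr(b))^{2}=0\neq m\tr(b^{2})$ (using $p\nmid m$ and $\tr(b^{2})\neq0$), which is why the two middle rows of the table merge. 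The main obstacle is not any deep step but careful bookkeeping: keeping the sign conventions of Lemma \ref{Lem9} straight across all combinations of $(\tr(b^{2}),\tr(b))$, the only genuine mathematical input beyond substitution being the quadratic-character identity $\overline{\eta}(-m)=\overline{\eta}(-\tr(b^{2}))$ used in the last case.
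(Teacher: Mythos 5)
Your proposal is correct and follows exactly the route the paper takes: the paper's proof of this lemma is a one-line citation of \eqref{eq-weight}, Lemma \ref{Lem8} and Lemma \ref{Lem9}, and your case-by-case substitution (including the observation that $\overline{\eta}(-m)=\overline{\eta}(-\tr(b^{2}))$ when $(\tr(b))^{2}=m\tr(b^{2})$ with $\tr(b)\neq0$) supplies precisely the bookkeeping the paper omits.
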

\begin{proof}
This lemma follows from  \eqref{eq-weight}, Lemmas \ref{Lem8} and Lemma \ref{Lem9},
\end{proof}

With the help of preceding lemmas we can now prove   Theorem \ref{Thm4}.
If  $ 2\nmid m$ and  $p\nmid m$,  by Lemma \ref{Lem8}, we have
$$ n_{0}=p^{m-1}+p^{-1}\overline{\eta}(-m)G\overline{G}. $$
 By Lemma \ref{Lem18}, we know $ wt(c_b)$ has five possible values. Let
\begin{align*}
w_1&=(p-1)p^{m-2}+\frac{1}{p}\overline{\eta}(-m)G\overline{G},\quad w_2=(p-1)p^{m-2},\\   w_3&=(p-1)p^{m-2}+\frac{1}{p^{2}}(p\overline{\eta}(-m)+ 1)G\overline{G}, \\ w_4&=(p-1)p^{m-2}+\frac{1}{p^{2}}(p\overline{\eta}(-m)- 1)G\overline{G}, \\ w_5&=(p-1)p^{m-2}+p^{-2}\overline{\eta}(-m)G\overline{G}.
\end{align*}
It follows from  Lemmas \ref{Lem11},  \ref{Lem12} and  \ref{Lem18} that
\begin{align*}
 A_{\omega_{1}}&=(p-1)(p^{m-2}-p^{-2}\overline{\eta}(-m)G\overline{G}), \\
 A_{\omega_{2}}&=p^{m-2}+p^{-2}\overline{\eta}(-m)(p-1)G\overline{G}-1, \\ A_{\omega_{5}}&=(p-1)p^{m-2}+p^{-2}\overline{\eta}(-m)(p-1)^{2}G\overline{G},
\end{align*}
where $A_{w_i}$ denotes the frequency of $w_i$.
It can be easily  checked  that the minimum distance of the dual code $ \mathcal{C}^{\perp}_{D} $ of $\mathcal{C}_{D}$ is  equal to $ 2 $.  By the first two Pless Power Moments (\cite{HP}, p. 260) the frequency $A_{w_i}$ of $w_{i}$ satisfies the following equations:
\begin{eqnarray}\label{eq-a1}
\left\{\begin{array}{l}
         A_{w_1}+A_{w_2}+A_{w_3}+A_{w_4}+A_{w_5}=p^{m}-1, \\
         w_1A_{w_1}+w_2A_{w_2}+w_3A_{w_3}+w_4A_{w_4}+w_5A_{w_5}=p^{m-1}(p-1)n,
       \end{array}
       \right.
\end{eqnarray}
where $n=p^{m-1}+p^{-1}\overline{\eta}(-m)G\overline{G}-1$.
 A simple manipulation leads to the weight distribution of \autoref{tal:weightdistribution4}.

\section{Concluding Remarks}
In this paper, we present a class of three-weight and five-weight linear codes. There is a survey on three-weight codes in \cite{DLLZ}. A number of three-weight and five-weight codes were discussed in \cite{CJ,Ding09,DGZ,DD14,DD15,X,ZD,ZLFH,ZDLZ}.

Let $w_{\min}$ and $w_{\max}$ denote the minimum and maximum nonzero weight of a linear code $\C$. The linear code $\C$ with
$w_{\min}/w_{\max}>(p-1)/p$ can be used to construct a secret sharing scheme with  interesting access structures (see \cite{YD}).

Let $m\geq4$. Then for  the linear code $\C_D$ of Theorem \ref{Thm1}, we have
$$
\frac{w_{\min}}{w_{\max}}=\frac{(p-1)p^{m-2}-(p-2)p^{\frac{m-2}{2}}}{(p-1)p^{m-2}} \textrm{\ or\ }
\frac{(p-1)p^{m-2}}{(p-1)p^{m-2}+(p-2)p^{\frac{m-2}{2}}} .
$$
It can be easily checked  that
$$
\frac{(p-1)p^{m-2}}{(p-1)p^{m-2}+(p-2)p^{\frac{m-2}{2}}}>\frac{(p-1)p^{m-2}-(p-2)p^{\frac{m-2}{2}}}{(p-1)p^{m-2}}>\frac{p-1}{p}.
$$
Hence,
$$\frac{w_{\min}}{w_{\max}}>\frac{p-1}{p}.$$

Let $m\geq6$. Then for the linear code $\C_D$ of Theorem \ref{Thm2}, we have
$$
\frac{w_{\min}}{w_{\max}}=\frac{(p-1)p^{m-2}-2p^{\frac{m-2}{2}}}{(p-1)p^{m-2}}\textrm{\ or\ }
 \frac{(p-1)p^{m-2}}{(p-1)p^{m-2}+2p^{\frac{m-2}{2}}}.
$$
Simple computation shows that
$$
\frac{(p-1)p^{m-2}}{(p-1)p^{m-2}+2p^{\frac{m-2}{2}}}>\frac{(p-1)p^{m-2}-2p^{\frac{m-2}{2}}}{(p-1)p^{m-2}}>\frac{p-1}{p}.
$$
Therefore,
$$\frac{w_{\min}}{w_{\max}}>\frac{p-1}{p}.$$

Let $m\geq5$. Then for the linear code $\C_D$ of Theorem \ref{Thm3}, we have
$$
\frac{w_{\min}}{w_{\max}}=\frac{(p-1)p^{m-2}-(p-1)p^{\frac{m-3}{2}}}{(p-1)p^{m-2}+(p-1)p^{\frac{m-3}{2}}}>\frac{p-1}{p}.
$$

Let $m\geq5$. Then for the linear code $\C_D$ of Theorem \ref{Thm4}, we have
$$
\frac{w_{\min}}{w_{\max}}=\frac{(p-1)p^{m-2}-(p+1)p^{\frac{m-3}{2}}}{(p-1)p^{m-2}}
 \textrm{\ or\ }\frac{(p-1)p^{m-2}}{(p-1)p^{m-2}+(p+1)p^{\frac{m-3}{2}}}.
$$
It is easy to show that $$\frac{(p-1)p^{m-2}}{(p-1)p^{m-2}+(p+1)p^{\frac{m-3}{2}}}>\frac{(p-1)p^{m-2}-(p+1)p^{\frac{m-3}{2}}}{(p-1)p^{m-2}}>\frac{p-1}{p}.$$
Then we get
$$
\frac{w_{\min}}{w_{\max}}>\frac{p-1}{p}.
$$
To sum up, the linear codes $\C_D$ with $m\geq5$ can be employed to get secret sharing schemes.

{\bf Acknowledgements. } \quad \ This research is supported by a National Key Basic Research Project of China (2011CB302400), National Natural Science Foundation of China (61379139),  the ``Strategic Priority Research Program" of the Chinese Academy of Sciences, Grant No. XDA06010701 and Foundation of NSSFC(No.13CTJ006).


\begin{thebibliography}{}
\bibitem{CKNC} S.-T. Choi, J.-Y. Kim, J.-S. No, and H. Chung, ``Weight distribution of
some cyclic codes,'' in Proc. Int. Symp. Inf. Theory,  pp. 2911--2913, 2012.
\bibitem{CJ} B. Courteau, J. Wolfmann,  ``On triple--sum--sets and two or three weights codes,"  \emph{Discrete Mathematics,} vol. 50,  pp.179-191,  1984.
\bibitem{Ding09} C. Ding, ``A class of three-weight and four-weight codes," in: C. Xing, et al. (Eds.), Proc. of
the Second International Workshop on Coding Theory and Cryptography, in: Lecture Notes
in Computer Science, vol. 5557, Springer Verlag, pp. 34-42, 2009.
\bibitem{Ding15} C. Ding, ``Linear codes from some 2-designs," \emph{IEEE Trans. Inf. Theory,} vol.  61, no. 6, pp. 3265-3275, 2015.
 \bibitem{DLN}  C. Ding, J. Luo, H. Niederreiter, ``Two-weight codes punctured from irreducible cyclic codes,"
in: Y. Li, et al. (Eds.), Proceedings of the First Worshop on Coding and Cryptography,
World Scientific, Singapore, pp. 119-124, 2008.

\bibitem{DLLZ} C. Ding, C. Li, N. Li, and Z. Zhou, ``Three-weight cyclic codes and their
weight distributions,"  submitted for publication.
 \bibitem{DY}C. Ding, J. Yang, ``Hamming weights in irreducible cyclic codes," \emph{Discrete Math.,} vol. 313, no. 4,  pp. 434--446,  2013.

\bibitem{DGZ} C. Ding, Y. Gao, Z. Zhou,
``Five Families of Three-Weight Ternary Cyclic Codes and Their Duals," \emph{IEEE Trans.  Inf. Theory}, vol. 59, no. 12, pp. 7940--7946, 2013.
\bibitem{DD14} K. Ding,  C. Ding, ``Bianry linear codes with three weights," \emph{IEEE Communication Letters},
vol. 18, no. 11, pp. 1879--1882,  2014.

 \bibitem{DD15}K. Ding,  C. Ding, ``A class of two-weight and three-weight codes and their applications in secret sharing," arxiv:1503,06512v1.
\bibitem{MG} M. Grassl, ¡°Bounds on the minumum distance of linear
codes,¡± avaiable online at http://www.codetables.de.

\bibitem{HP} W. C. Huffman and V. Pless, {\em Fundamentals of error-correcting codes, }
Cambridge: Cambridge University Press, 2003.

\bibitem{LYL} C. Li, Q. Yue, and F. Li, ¡°Hamming weights of the duals of cyclic codes
with two zeros,¡± \emph{IEEE Trans. Inf. Theory,} vol. 60, no. 7, pp. 3895--3902, Jul. 2014.
\bibitem{LYF} C. Li, Q. Yue, and F. W. Fu, ``Complete weight enumerators of some cyclic codes,"  \emph{Des. Codes
Cryptogr.,} DOI 10.1007/s10623-015-0091-5, 2015.

\bibitem{LN}  R. Lidl, H. Niederreiter, { Finite fields}. Cambridge University Press, New York (1997)
\bibitem{WDX}Q. Wang, K. Ding, R. Xue, ``Binary linear codes with two weights" \emph{IEEE Communication Letters}, vol. 19, no. 7, Jul. 2015.
\bibitem{X} C. Xiang, ``A family of three-weight binary linear codes," arxiv: 1505,07726
 \bibitem{YD}J. Yuan and C. Ding, ``Secret sharing schemes from three classes of linear codes,"
{\em IEEE Trans. Inf. Theory,} vol. 52, no. 1, pp. 206--212, 2006.

\bibitem{ZD}Z. Zhou, C. Ding,  ``A class of three--weight cyclic codes," \emph{Finite Fields and Their Applications,} vol. 25, pp. 79--93, 2014.
\bibitem{ZLFH} Z. Zhou, N. Li, C. Fan, T. Helleseth,  ``Linear codes with two or three weights from quadratic bent functions," arxiv: 1505.06830
\bibitem{ZDLZ} Z. Zhou, C. Ding, J. Luo, A. Zhang,
``A family of five-weight cyclic codes and their weight enumerators," \emph{IEEE Trans  Inf. Theory}, vol. 59, no. 10, pp. 6674--6682, 2013.
\bibitem{ZZD}Z. Zhou, A. Zhang, C. Ding, M. Xiong,
``The weight enumerator of three families of cyclic codes," \emph{IEEE Trans. Inf. Theory}, vol. 59, no. 9,  pp. 6002--6009, 2013.
\end{thebibliography}
\end{document}